\newtheorem{lemma}{\textbf{Lemma}}[section]
\newtheorem{theorem}{\textbf{Theorem}}
\newtheorem{remark}{\textbf{Remark}}[section]
\newtheorem{corollary}{\textbf{Corollary}}[section]
\newtheorem{example}{\textbf{Example}}[section]
\newtheorem{proposition}{\textbf{Proposition}}[section]
\newtheorem{Definition}{\textbf{Definition}}
\begin{document}

\baselineskip 17pt\title{\Large\bf Groups of linear isometries on weighted poset block spaces}
\author{\large  Wen Ma \quad\quad Jinquan Luo\footnote{The authors are with School of Mathematics
and Statistics \& Hubei Key Laboratory of Mathematical Sciences, Central China Normal University, Wuhan China.\newline
 E-mails: mawen95@126.com(W.Ma),  luojinquan@mail.ccnu.edu.cn(J.Luo)}}
\date{}
\maketitle

{\bf Abstract}: In this paper, we introduce a new family of metrics, weighted poset block metric, that combine the weighted coordinates poset metric introduced by Panek et al. [(\ref{panek})] and the metric for linear error-block codes introduced by Feng et al. [(\ref{FENG})]. This type of metrics include many classical metrics such as Hamming metric, Lee metric, poset metric, pomset metric, poset block metric, pomset block metric and so on. We give a complete description of the groups of linear isometries of these metric spaces in terms of a semi-direct product. Furthermore, we obtain a Singleton type bound for codes equipped with weighted poset block metric and define MDS codes. As a special case when the poset is a chain, we show that MDS codes are equivalent to perfect codes.

{\bf Key words}: linear isometry, poset block metric, principal ideal, pomset block metric, weighted poset block metric

\section{Introduction}

\quad\;The study of codes endowed with a metric other than the Hamming metric gained momentum since 1990's, such as the \emph{poset metric} by Brualdi et al. ([\ref{BG}]). Over the last two decades, the study of codes in the poset metric has made the subject of coding theory to see several developments. It paved a way for studying codes equipped with various metrics.

Feng, Xu and Hickernell ([\ref{FENG}]) introduced the block metric by partitioning the set of coordinate positions of $\mathbb{F}_q^n$ and studied MDS block codes. \emph{Poset block metric} was introduced by Alves, Panek, and Firer ([\ref{ALVES}]) unifying poset metric and block metric. Later, Dass, Sharma and Verma obtained a Singleton type bound for poset block codes and define a maximum distance separable poset block code as a code meeting this bound. They also extended the concept of $I$-balls to poset block metric and describe $r$-perfect and MDS $(P,\pi)$-codes in terms of $I$-perfect codes. \emph{Niederreiter-Rosenbloom-Tsfasman block metric} (in short, \emph{NRT block metric}) is a particular case of poset block when the poset is a chain. Panek, Firer and Alves in [\ref{PANEK}] classified the classes of equivalent codes and developed much of the classical theory for NRT block codes.

As the support of a vector $v$ in $\mathbb{F}_q^n$ is a set and hence induces order ideals and metrics on $\mathbb{F}_q^n$, the poset metric codes could not accommodate Lee metric structure due to the fact that the support of a vector with respect to Lee weight is not a set but rather a multiset. In order to handle Lee metric, a much general class of metrics called \emph{pomset metric} is introduced by Irrinki and Selvaraj ([\ref{IGS1}],[\ref{IGS2}],[\ref{IGS3}]) for codes over $\mathbb{Z}_m$. Furthermore, construction of pomset codes are obtained and their metric properties like minimum distance, covering radius, MacWilliams' identity, Singleton-type bound, perfect codes and so on are determined.

The group of linear isometries were determined for the Rosenbloom-Tsfasman space, generalized Rosenbloom-Tsfasman space and crown space (see [\ref{CROWN}],[\ref{GRT}],[\ref{RT}]). In [\ref{FIRERHYUN}], Panek et al. gave a complete description of group of linear isometries of poset metric space. Extending their approach,
Panek and Pinheiro studied \emph{Weighted coordinates poset metric} ([\ref{panek}]) that include any additive metric space such as Hamming metric and Lee metric, as well as the poset metric and pomset metric. They provided a complete description of the groups of linear isometries of these metric spaces in terms of a semidirect product, which turns out to be similar to the case of poset metric spaces.

It is natural to ask: does there exist a metric that will be compatible with additive metric and block metric? In this work, we combine weighted poset metric with error-block metric to obtain a further generalization called the\emph{ weighted poset block metric} which includes not only all additive metrics mentioned above but also some block metric such as \emph{poset block metric}, \emph{pomset block metric} and so on.

The paper is organized as follows. Section 2 contains basic notions of posets and defines the weighted poset block metric over $\mathbb{F}_q^n$. In Section 3, we give a complete description of groups of linear isometries of the metric space $\left(V,d_{w,(P,\pi)}\right)$, for any weighted poset block metric $d_{w,(P,\pi)}$. In Section 4, we give some examples to illustrate our conclusion. Finally, we establish a singleton type bound for codes with weighted poset block metric and define a maximum distance separable $(P,\pi,w)$-code (MDS $(P,\pi,w)$-code) as a code meeting this bound. The connection of MDS $(P,\pi,w)$-codes with perfect codes is also investigated when $P$ is considered to be a chain.

\usetikzlibrary{shapes.geometric, arrows}
\thispagestyle{empty}
\tikzstyle{startstop} = [rectangle, rounded corners, minimum width = 2cm, minimum height=1cm,text centered, draw = black]
\tikzstyle{io} = [rectangle, rounded corners, minimum width = 2cm, minimum height=1cm,text centered, draw = black]
\tikzstyle{process} = [rectangle, rounded corners, minimum width = 2cm, minimum height=1cm,text centered, draw = black]
\tikzstyle{decision} = [rectangle, rounded corners, minimum width = 2cm, minimum height=1cm,text centered, draw = black]
\tikzstyle{arrow} = [->,>=stealth]
\begin{center}
\begin{table}[H]
    \caption{}
\begin{center}
\begin{tikzpicture}[node distance=1cm]

\node[startstop](start){weighted poset block metric};
\node[io, below of = start, yshift = -1cm](in1){weighted coordinates poset metric};
\node[process, left of = in1, xshift = -4cm] (pro1) {poset block metric};
\node[process, right of = in1, xshift = 4cm] (pro2) {pomset block metric};
\node[io, below of = in1, yshift = -1cm](in2){additive metric};
\node[process, left of = in2, xshift = -4cm] (pro3) {poset metric};
\node[process, left of = in2, xshift = -7cm] (pro7) {NRT block metric};
\node[process, right of = in2, xshift = 4cm] (pro4) {pomset metric};
\node[process, below of = pro3, yshift = -1cm] (pro5) {Hamming metric};
\node[process, below of = pro4, yshift = -1cm] (pro6) {Lee metric};
\node[io, below of = in2, yshift = -3cm](in3){Hamming and Lee metrics on $\mathbb{F}_2$ and $\mathbb{F}_3$};
\coordinate (point1) at (-2cm, -4cm);
\draw [arrow] (start) -- node [right] {$\pi(i)=1$} (in1);
\draw [arrow] (start) -- (pro1);
\draw [arrow] (start) -- (pro2);
\draw [arrow] (in1) -- node [right] {anti-chain} (in2);
\draw [arrow] (in1) -- (pro3);
\draw [arrow] (in1) -- (pro4);
\draw [arrow] (pro2) -- node [right] {$\pi(i)=1$} (pro4);
\draw [arrow] (pro1) -- node [right] {$\pi(i)=1$} (pro3);
\draw [arrow] (pro1) -- node [left] {chain} (pro7);
\draw [arrow] (pro3) -- node [right] {anti-chain} (pro5);
\draw [arrow] (pro4) -- node [right] {anti-chain} (pro6);
\draw [arrow] (in2) -- (pro6);
\draw [arrow] (in2) -- (pro5);
\draw [arrow] (pro5) -- (in3);
\draw [arrow] (pro6) -- (in3);
\end{tikzpicture}
\end{center}
\end{table}

\end{center}

\section{Preliminaries}

\quad\; In this section, we give basic definitions and results of a weighted poset block metric to be used in the subsequent sections.

\subsection{On weighted poset block metric}

\quad\; The definitions of weight and metric can be defined on general rings. In particular, we restrict it to finite fields because it is the most explored in the context of coding theory.

Let $\mathbb{F}_q$ be the finite field of order $q$ and $\mathbb{F}_q^n$ the $n$-dimensional vector space over $\mathbb{F}_q$.

\begin{Definition}
A map $d: \mathbb{F}_q^n\times \mathbb{F}_q^n\rightarrow \mathbb{N}$ is a metric on $\mathbb{F}_q^n$ if it satisfies the following conditions:
\begin{enumerate}[(1)]
\item (non-negativity) $d(u,v)\geq 0$ for all $u,v\in\mathbb{F}_q^n$ and $d(u,v)=0$ if and only if $u=v$;
\item (symmetry) $d(u,v)=d(v,u)$ for all $u,v\in \mathbb{F}_q^n$;
\item (triangle inequality) $d(u,v)\leq d(u,w)+d(w,v)$ for all $u,v,w\in \mathbb{F}_q^n$.
\end{enumerate}
\end{Definition}

\begin{Definition}
A map $w: \mathbb{F}_q^n\rightarrow \mathbb{N}$ is a weight on $\mathbb{F}_q^n$ if it satisfies the following conditions:
\begin{enumerate}[(1)]
\item $w(u)\geq 0$ for all $u\in \mathbb{F}_q^n$ and $w(u)=0$ if and only if $u=0$;
\item $w(u)=w(-u)$ for all $u\in \mathbb{F}_q^n$;
\item $w(u+v)\leq w(u)+w(v)$ for all $u,v\in\mathbb{F}_q^n$.
\end{enumerate}
\end{Definition}

It is straightforward to prove that, if $w$ is a weight over $\mathbb{F}_q^n$, then the map $d_w$ defined by $d(u,v)=w(u-v)$ is a metric on $\mathbb{F}_q^n$. See [\ref{DEZA}] and [\ref{GAB}] for detailed discussion on weight and metric.

Let $P$ be a set. A \emph{partial order} on $P$ is a binary relation $\leq$ on $P$ such  that for all $x,y,z\in P$, we have $x\leq x$ (\emph{reflexivity}), $x\leq y$ and $y\leq x$ imply $x=y$ (\emph{antisymmetry}), $x\leq y$ and $y\leq z$ imply $x\leq z$ (\emph{transitivity}). A set $P$ equipped with an order relation $\leq$ is said to be a \emph{poset}.  An element $a\in P$ is a \emph{maximal element} of $P$ if $a\leq b$ and $b\in P$ imply $a=b$. We denote by $\max P$ the set of all maximal elements of $P$. A poset $P$ is a \emph{chain} if any two elements of $P$ are comparable. The opposite of a chain is an antichain, that is, a poset $P$ is an \emph{antichain} if $x\leq y$ in $P$ only when $x=y$. We call a subset $Q$ of $P$ an \emph{ideal} if, whenever $x\in Q$, $y\in P$ and $y\leq x$, we have $y\in Q$. For a subset $E$ of $P$, the \emph{ideal generated by $E$}, denoted by $\langle E\rangle$, is the smallest ideal of $P$ containing $E$. We prefer to denote the ideal generated by $\{i\}$ as $\langle i\rangle$ instead of $\langle\{i\}\rangle$. We denoted by $\langle i\rangle^{*}$ the \emph{difference} $\langle i\rangle-\{i\}=\{j\in P: j<i\}$.

Given two posets $P$ and $Q$, we say that $P$ and $Q$ are \emph{isomorphic}, and write $P\cong Q$, if there exists a bijective map $\varphi$ from $P$ onto $Q$ such that $x\leq y$ in $P$ if and only if $\varphi(x)\leq \varphi(y)$ in $Q$. Then $\varphi$ is called an \emph{order-isomorphism}.  An order-isomorphism $\varphi:P\rightarrow P$ is called an \emph{automorphism} and we denote by $Aut(P)$ the group of automorphisms of $P$.

Let $[s]=\{1,2,\ldots,s\}$. Let $P=([s],\leq)$ be a poset and let $\pi: [s]\rightarrow\mathbb{N}^{+}$ be a map such that $n=\sum\limits_{i=1}^s\pi(i)$. The map $\pi$ is said to be a \emph{labeling} of the poset $P$, and the pair $(P,\pi)$ is called a \emph{poset block structure} over $[s]$. Denote $\pi(i)$ by $k_i$. We take $V_i$ as the $\mathbb{F}_q$-vector space $\mathbb{F}_q^{k_i}$ for all $1\leq i\leq s$. We define $V$ as the direct sum
\begin{equation}\label{sum}
V=V_1\oplus V_2\oplus\cdots\oplus V_s
\end{equation}
which is isomorphic to $\mathbb{F}_q^n$. Each $u\in V$ can be uniquely decomposed as
\begin{equation}\label{decom}
u=u_1+u_2+\cdots+u_s
\end{equation}
 where $u_i=(u_{i_1},\ldots,u_{ik_i})\in V_i$ for $1\leq i\leq s$.

Let $w$ be a weight on $\mathbb{F}_q$ and $P=([s],\leq)$ be a poset. Given $u\in V$, set
\begin{equation}\label{local}
W_i(u)=\max\left\{w(u_{ij}): 1\leq j\leq k_i\right\}\ \text{for}\ 1\leq i\leq s;
\end{equation}

\begin{equation}\label{max}
M_w=\max\left\{w(\alpha): \alpha\in\mathbb{F}_q\right\};
\end{equation}
\begin{equation}\label{min}
m_w=\min\left\{w(\alpha): 0\neq\alpha\in\mathbb{F}_q\right\}.
\end{equation}

The \emph{block support} or \emph{$\pi$-support} of $u\in V$ is the set
$$supp_{\pi}(u)=\left\{i\in [s]:u_i\neq 0\right\}.$$
We denote by $I_u^P$ the ideal generated by $supp_{\pi}(u)$ and denote by $M_u^P$ the set of maximal elements in the ideal $I_u^P$. The \emph{$(P,\pi,w)$-weight} of $u$ is defined as
\begin{equation}
\overline{\omega}_{w,(P,\pi)}(u)=\sum\limits_{i\in M_u^P} W_i(u)+\sum\limits_{i\in I_u^P\setminus M_u^P}M_w.
\end{equation}
For $u,v\in V$, define their \emph{$(P,\pi,w)$-distance} as
\begin{equation}
d_{w,(P,\pi)}(u,v)=\overline{\omega}_{w,(P,\pi)}(u-v).
\end{equation}

\begin{theorem}
The $(P,\pi,w)$-weight defined above is a weight over $V$ and thus the $(P,\pi,w)$-distance $d_{w,(P,\pi)}(.,.)$ is a metric over $V$.
\end{theorem}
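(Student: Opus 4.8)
The plan is to verify directly the three defining properties of a weight for $\overline{\omega}_{w,(P,\pi)}$ (writing $\overline{\omega}$ for it from now on); the metric claim then follows at once from the remark stated above that $d_w(u,v)=w(u-v)$ is a metric whenever $w$ is a weight. Non‑negativity is immediate, since every summand $W_i(u)$ and every $M_w$ is non‑negative. For definiteness, if $u=0$ then $supp_{\pi}(u)=\emptyset$, so $I_u^P=M_u^P=\emptyset$ and $\overline{\omega}(u)=0$; conversely, if $u\neq 0$ then $supp_{\pi}(u)\neq\emptyset$, so the finite ideal $I_u^P$ has a maximal element $i\in M_u^P\subseteq supp_{\pi}(u)$, and $u_i\neq 0$ forces $W_i(u)\geq m_w>0$, whence $\overline{\omega}(u)>0$. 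Symmetry uses $w(\alpha)=w(-\alpha)$: this gives $W_i(-u)=W_i(u)$, while $supp_{\pi}(-u)=supp_{\pi}(u)$ yields $I_{-u}^P=I_u^P$ and $M_{-u}^P=M_u^P$, so $\overline{\omega}(-u)=\overline{\omega}(u)$.

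The real content is subadditivity, $\overline{\omega}(u+v)\leq\overline{\omega}(u)+\overline{\omega}(v)$, and this is the step I expect to be the main obstacle. I would first record the elementary ingredients: (i) $supp_{\pi}(u+v)\subseteq supp_{\pi}(u)\cup supp_{\pi}(v)$, hence (since $I_u^P$, $I_v^P$ are down‑sets) $I_{u+v}^P\subseteq I_u^P\cup I_v^P$; (ii) $W_i$ is itself subadditive, $W_i(u+v)\leq W_i(u)+W_i(v)$, with $W_i(u+v)=W_i(u)$ whenever $v_i=0$; and (iii) $W_i(\cdot)\leq M_w$ always, together with $M_u^P\subseteq supp_{\pi}(u)$. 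Setting $z=u+v$, I would write $\overline{\omega}(z)=\sum_{i\in I_z^P}f_z(i)$ where $f_z(i)=W_i(z)$ for $i\in M_z^P$ and $f_z(i)=M_w$ otherwise, and similarly define $f_u,f_v$.

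The argument is then a charging scheme: each cost $f_z(i)$, for $i\in I_z^P$, is assigned to the slot $i$ of the budget $\overline{\omega}(u)$ and/or of $\overline{\omega}(v)$, in such a way that distinct indices use distinct slots and each slot $i$ of $u$ (resp.\ $v$) absorbs at most $f_u(i)$ (resp.\ $f_v(i)$). For a maximal index $i\in M_z^P$ the block $z_i=u_i+v_i$ is nonzero: if only $u_i\neq 0$, charge $W_i(z)=W_i(u)$ to $u$; symmetrically if only $v_i\neq 0$; and if both are nonzero, charge $W_i(u)$ to $u$ and $W_i(v)$ to $v$, which suffices because $W_i(z)\leq W_i(u)+W_i(v)$. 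In every case $W_i(u)\leq f_u(i)$, since $f_u(i)=W_i(u)$ when $i\in M_u^P$ and $f_u(i)=M_w\geq W_i(u)$ otherwise.

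The delicate point is charging the non‑maximal indices $i\in I_z^P\setminus M_z^P$, each of which costs the full $M_w$ and for which a naive slot‑by‑slot comparison can fail (a maximal slot of $I_u^P$ need only carry $W_i(u)<M_w$). Here I would exploit the ideal structure: since $I_z^P$ is a finite ideal and $i$ is non‑maximal, there is $j\in M_z^P$ with $i<j$, and $j\in supp_{\pi}(z)\subseteq supp_{\pi}(u)\cup supp_{\pi}(v)$. If $j\in supp_{\pi}(u)$, then $j\in I_u^P$ with $i<j$ forces $i$ to be a \emph{non‑maximal} element of $I_u^P$, so $f_u(i)=M_w$ and the full $M_w$ can be charged to slot $i$ of $u$; otherwise $j\in supp_{\pi}(v)$ and the same reasoning charges $M_w$ to slot $i$ of $v$. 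Since maximal and non‑maximal indices occupy disjoint slot sets contained in $I_u^P$ and $I_v^P$, summing all charges gives $\overline{\omega}(z)\leq\sum_{i\in I_u^P}f_u(i)+\sum_{i\in I_v^P}f_v(i)=\overline{\omega}(u)+\overline{\omega}(v)$, which completes the verification.
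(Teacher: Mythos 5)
Your proof is correct and follows essentially the same route as the paper: the first two weight axioms are checked directly, and the triangle inequality is reduced to the inclusions $supp_{\pi}(u+v)\subseteq supp_{\pi}(u)\cup supp_{\pi}(v)$ and $I_{u+v}^P\setminus M_{u+v}^P\subseteq \left(I_u^P\setminus M_u^P\right)\cup\left(I_v^P\setminus M_v^P\right)$ together with subadditivity of $W_i$. The only difference is that the paper cites the second inclusion from the Panek--Pinheiro reference, whereas your charging scheme actually proves it (via the observation that a non-maximal $i\in I_{u+v}^P$ lies below some $j\in M_{u+v}^P\subseteq supp_{\pi}(u)\cup supp_{\pi}(v)$ and is therefore non-maximal in $I_u^P$ or $I_v^P$), which makes your write-up self-contained.
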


\begin{proof}
\begin{enumerate}[(1)]
\item Clearly $\overline{\omega}_{w,(P,\pi)}(u)\geq 0$ and $\overline{\omega}_{w,(P,\pi)}(u)=0$ iff $u=0$ for all $u\in V$.
\item As $w(a)=w(-a)$ for all $a\in\mathbb{F}_q$ and $supp_{\pi}(u)=supp_{\pi}(-u)$ for all $u\in V$, it follows that $\overline{\omega}_{w,(P,\pi)}(u)=\overline{\omega}_{w,(P,\pi)}(-u)$.
\item  Finally we show that the $(P,\pi)$-weight satisfies the triangle inequality. Take $u,v\in V$, we have
\begin{eqnarray*}
\overline{\omega}_{w,(P,\pi)}(u+v)&=&\sum\limits_{i\in M_{u+v}^P}W_i(u+v)+\sum\limits_{i\in I_{u+v}^P\setminus M_{u+v}^P} M_w\\
&\leq&\sum\limits_{i\in M_{u+v}^P}[W_i(u)+W_i(v)]+\sum\limits_{i\in I_{u+v}^P\setminus
M_{u+v}^P} M_w\\
&\leq&\sum\limits_{i\in M_u^P}W_i(u)+\sum\limits_{i\in I_u^p\setminus M_u^P}M_w+\sum\limits_{i\in M_v^P}W_i(v)+\sum\limits_{i\in I_v^p\setminus M_v^P}M_w\\
&=&\overline{\omega}_{w,(P,\pi)}(u)+\overline{\omega}_{w,(P,\pi)}(v).
\end{eqnarray*}
The second and third inequalities follow from that $w$ is a weight over $\mathbb{F}_q$ and $supp_{\pi}(u+v)\subseteq supp_{\pi}(u)\cup supp_{\pi}(v)$, $I_{u+v}^P\setminus M_{u+v}^P\subseteq I_u^P\setminus M_u^P\cup I_v^P\setminus M_v^P$ (see [\ref{panek}]) respectively. Therefore $\overline{\omega}_{w,(P,\pi)}$ is a weight over $V$.
\end{enumerate}

\end{proof}

The metric $d_{w,(P,\pi)}(.,.)$ is called the \emph{weighted poset block metric} and the pair $\left(V,d_{w,(P,\pi)}\right)$ is said to be a \emph{weighted poset block space}. When the label $\pi$ satisfies $\pi(i)=1$ for all $i\in [s]$ the weighted poset block metric is the weighted coordinates poset metric proposed by Panek et al. in [\ref{panek}] which combines and extends several classic metrics of coding theory such as Hamming metric, Lee metric, poset metric ([\ref{BG}]), pomset metric ([\ref{IGS1}]) and so on. We also refer the reader to [\ref{hyun}, \ref{RAM}, \ref{XKH}, \ref{xkh}] for a general metric called weighted poset metric. Let $P$ be a poset and $\theta:P\rightarrow \mathbb{R}^{+}$ be a map. For any $\beta\in V$, the weighted poset weight of $\beta$ is defined as
$$wt_{(P,w)}(\beta)=\sum\limits_{i\in\langle supp(\beta)\rangle_P}\theta(i).$$
Note that weighted poset metric is a metric which respect support condition. When the weight $w$ over $\mathbb{F}_q$ such that $w(\alpha)=t$ for all $\alpha\in\mathbb{F}_q$ and $\theta(i)=t$ for all $1\leq i\leq s$, weighted poset block metric would coincide with weighted poset metric over $V$.

\subsection{Special cases of weighted poset block metric}

\quad\;In this section, we will show some important metrics that can be deduced from weighted poset block metric.

\textbf{1. Poset block metric}

The \emph{poset block weight} is defined by
$$w_{(P,\pi)}(v)=\left|\langle supp_{\pi}(v)\rangle\right|$$
for all $v\in V$. For $u,v\in V$,
$$d_{(P,\pi)}(u,v)=w_{(P,\pi)}(u-v)$$
defines a metric over $V$ called the \emph{poset block metric} and the pair $\left(V,d_{(P,\pi)}\right)$ is said to be a \emph{poset block space}.
It is clear that poset block weight is a particular case of weighted poset block weight when $w$ is taken to be Hamming weight over $\mathbb{F}_q$.

\noindent \textbf{2. Pomset block metric}

Though we define weighted poset block metrics over $V=\bigoplus\limits_{i=1}^s\mathbb{F}_q^{k_i}$ in Section 2.1, it is still valid if we consider the free module over $\mathbb{Z}_m$ instead of $\mathbb{F}_q$.

Now we recall some definitions and properties regarding multisets needed for defining the pomset block metric and we show that it is a special case of weighted poset block metric.

Let $X$ be a set of elements. An \emph{multiset} (in short, \emph{mset}) drawn from $X$ is represented as $M=\{m_1/a_1,m_2/a_2,\ldots,m_n/a_n\}$ where $a_i\in X$ and $m_i$ is the number of occurrences of $a_i$ in $M$ denoted by $C_{M}(a_i)$. The \emph{cardinality} of $M$ is given by $|M|=\sum\limits_{a\in {X}}C_M(a)$. The set $M^{*}=\left\{a\in X: C_M(a)>0\right\}$ is called the \emph{root set} of $M$. A \emph{submultiset} (in short, \emph{subset}) of $M$ is an mset $M_1$ drawn from $X$ which satisfies $C_{M_1}(a)\leq C_M(a)$ for all $a\in X$.

Let $M_1$ and $M_2$ be two msets drawn from a set $X$. The \emph{union} of $M_1$ and $M_2$ is an mset $M$ such that for all $a\in X$, $C_{M}(a)=\max\left\{C_{M_1}(a), C_{M_2}(a)\right\}$. The \emph{Cartesian product} of $M_1$ and $M_2$, denoted by $M_1\times M_2$, is an mset $M=\{mm'/ab:m/a\in M_1,\ m'/b\in M_2\}$. A subset $R$ of $M_1$ and $M_2$ is said to be an mset \emph{relation} if every member $(m/a,m'/b)$ of $R$ has count $C_{M_1}(a)\cdot C_{M_2}(b)$.

An mset relation on an mset $M$ drawn from $X$ is said to be \emph{reflexive} iff $(m/a,m/a)\in R$ for all $m/a\in M$; \emph{anti-symmetric} iff $(m/a,m'/b)\in R$ and $(m'/b,m/a)\in R$ imply $m=m'$ and $a=b$; \emph{transitive} iff $(m/a,r/b)\in R$, $(r/b,t/c)\in R$ imply $(m/a,t/c)\in R$. The relation $R$ is called p\emph{omset relation} if it is reflexive, anti-symmetric and transitive. The pair $\mathbb{P}=(M,R)$ is called a \emph{pomset}. An \emph{order ideal} of $\mathbb{P}$ is a subset $I$ of $M$ such that if $m/a\in I$ and $(m'/b,m/a)\in R$ with $a\neq b$ imply $m'/b\in I$. An \emph{order ideal generated} by $m/a\in M$ is defined by
$$\langle m/a\rangle=\{m/a\}\cup \{m'/b\in M: (m'/b,m/a)\in R\ \text{for some}\ k>0\ \text{and}\ b\neq a\}.$$
An \emph{order ideal generated by a subset} $S$ of $M$ is defined by
 $\langle S\rangle=\bigcup\limits_{m/a\in S}\langle m/a\rangle$.

Consider the space $\mathbb{Z}_m^n$ and the multiset $M=\left\{\left\lfloor\frac{m}{2}\right\rfloor/1,\left\lfloor\frac{m}{2}\right\rfloor/2, \ldots,\left\lfloor\frac{m}{2}\right\rfloor/s\right\}$ drawn from the set $X=\{1,2,\ldots,s\}$. The \emph{Lee weight} of an element $a\in\mathbb{Z}_m$ is defined as $w_L(a)=\min\{a,m-a\}$ and the \emph{Lee block support} of $u\in\mathbb{Z}_m^{n}$ is defined as
$$supp_{(L,\pi)}(u)=\left\{s_i/i:s_i=w_{(L,\pi)}(u_i),s_i\neq 0\right\},$$
where
$$w_{(L,\pi)}(u_i)=\text{max}\left\{w_L(u_{i_t}):1\leq t\leq \pi(i)\right\}.$$
The \emph{$(\mathbb{P},\pi)$-weight} of $u\in\mathbb{Z}_m^{n}$ is given by
$$w_{(\mathbb{P},\pi)}(u)=\left|\langle supp_{(L,\pi)}(u)\rangle\right|.$$
For $u,v\in \mathbb{Z}_m^n$,
$$d_{(\mathbb{P},\pi)}(u,v)=w_{(\mathbb{P},\pi)}(u-v)$$
defines a metric over $\mathbb{Z}_m^n$ called pomset block metric. The pair $\left(\mathbb{Z}_m^n,d_{(\mathbb{P},\pi)}\right)$ is said to be a pomset block space.

\begin{proposition}
Let $\mathbb{P}=(M,R)$ be a pomset with $M=\left\{\left\lfloor\frac{m}{2}\right\rfloor/1,\left\lfloor\frac{m}{2}\right\rfloor/2, \ldots,\left\lfloor\frac{m}{2}\right\rfloor/s\right\}$. Let $P$ be the set $[s]$ and $\pi: [s]\rightarrow\mathbb{N}^{+}$ be a label with $\sum\limits_{i=1}^s\pi(i)=n$. Define a partial order on $P$ as
$$a\leq b\ \text{in}\ P\Leftrightarrow (r/a,t/b)\in R.$$
Then the $(P,\pi,w)$-weight and $(\mathbb{P},\pi)$-weight will coincide on $\mathbb{Z}_m^n$.
\end{proposition}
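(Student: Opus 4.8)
The plan is to take $w=w_L$, the Lee weight on $\mathbb{Z}_m$, in the definition of the $(P,\pi,w)$-weight, and then to show that the two formulas compute the same integer by matching them term by term. First I would record the two normalizations that make the identification possible. Since $w_L(a)=\min\{a,m-a\}$, the largest Lee weight is $M_w=\lfloor m/2\rfloor$, which is precisely the common multiplicity of every generator of the ground mset $M$. Moreover, for each block $i$ the local weight $W_i(u)=\max\{w_L(u_{ij}):1\leq j\leq k_i\}$ coincides with the count $s_i=w_{(L,\pi)}(u_i)$ attached to $i$ in the Lee block support; in particular $s_i>0$ iff $u_i\neq 0$, so the root set of $supp_{(L,\pi)}(u)$ equals $supp_{\pi}(u)$.

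Next I would identify the two notions of ideal. Under the dictionary $a\leq b$ in $P\Leftrightarrow (r/a,t/b)\in R$, an element $j$ lies with positive count in the pomset ideal $\langle supp_{(L,\pi)}(u)\rangle=\bigcup_{i\in supp_{\pi}(u)}\langle s_i/i\rangle$ exactly when $j\leq i$ in $P$ for some $i\in supp_{\pi}(u)$; that is, the root set of $\langle supp_{(L,\pi)}(u)\rangle$ is exactly the poset ideal $I_u^P$ generated by $supp_{\pi}(u)$. This reduces the computation of $|\langle supp_{(L,\pi)}(u)\rangle|$ to summing the counts $C(j)$ over $j\in I_u^P$.

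The heart of the argument, and the step I expect to require the most care, is the bookkeeping of these counts under the mset union, which takes the \emph{maximum} of the counts over the generating ideals. I would split $I_u^P$ into $M_u^P$ and $I_u^P\setminus M_u^P$. For $j\in I_u^P\setminus M_u^P$ the element $j$ is strictly below some maximal element $i\in M_u^P\subseteq supp_{\pi}(u)$; then $\langle s_i/i\rangle$ contributes $j$ with the full multiplicity $\lfloor m/2\rfloor$ inherited from $M$, and since this is the maximal possible count, $C(j)=\lfloor m/2\rfloor=M_w$. For $j\in M_u^P$ I would use the standard fact that a maximal element of the ideal generated by a set must itself belong to that set, so $j\in supp_{\pi}(u)$; since no support element lies strictly above $j$, the only generator that places $j$ in the union is $\langle s_j/j\rangle$ itself, giving $C(j)=s_j=W_j(u)$.

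Summing, $|\langle supp_{(L,\pi)}(u)\rangle|=\sum_{j\in M_u^P}W_j(u)+\sum_{j\in I_u^P\setminus M_u^P}M_w$, which is exactly $\overline{\omega}_{w,(P,\pi)}(u)$ for $w=w_L$; hence the two weights coincide on $\mathbb{Z}_m^n$. The only delicate points are verifying that a maximal element is never ``boosted'' to full multiplicity by another generator (handled by maximality) and that a non-maximal element is never left with a count below $\lfloor m/2\rfloor$ (handled by the cap at the maximal multiplicity in $M$).
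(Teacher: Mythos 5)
Your proof is correct and follows essentially the same route as the paper: identify $w_{(L,\pi)}(u_i)$ with $W_i(u)$ and then expand $\left|\langle supp_{(L,\pi)}(u)\rangle\right|$ as a sum of $W_j(u)$ over maximal elements plus $\lfloor m/2\rfloor=M_w$ over the rest. The only difference is that you explicitly justify the count bookkeeping (maximal elements keep count $s_j$, non-maximal ones are boosted to full multiplicity by the mset union), a step the paper asserts without detail.
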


\begin{proof}
Take $u\in \mathbb{Z}_m^n$. Then $u$ can be written as $u=u_1+u_2+\cdots+u_s$ with $u_i\in\mathbb{Z}_m^{\pi(i)}$. Note that $$w_{(L,\pi)}(u_i)=\text{max}\left\{w_L(u_{i_t}):1\leq t\leq \pi(i)\right\}=W_i(u).$$
 Hence
\begin{eqnarray*}
w_{(\mathbb{P},\pi)}(u)&=&|\langle supp_{(L,\pi)}(u)\rangle|\\
&=&\sum\limits_{i\in M_u^{\mathcal {P}}}w_{(L,\pi)}(u_i)+\sum\limits_{i\in I_{u}^{\mathcal {P}}\setminus M_u^{\mathcal {P}}} \left\lfloor\frac{m}{2}\right\rfloor\\
&=&\sum\limits_{i\in M_u^{\mathcal {P}}}W_i(u)+\sum\limits_{i\in I_{u}^{\mathcal {P}}\setminus M_u^{\mathcal {P}}} \left\lfloor\frac{m}{2}\right\rfloor\\
&=&\overline{\omega}_{w,(P,\pi)}(u).
\end{eqnarray*}
\end{proof}

\begin{itemize}
  \item When the weight $w$ is the Hamming weight over $\mathbb{F}_q$, the $(P,\pi,w)$-weight is the poset block weight proposed by Alves et al. in [\ref{ALVES}]. In particular, poset block metric deduces NRT block metric when $P$ is taken to be a chain and deduces poset metric when $\pi(i)=1$ for all $i\in[s]$. Similarly, classical Hamming metric becomes a particular case of poset metric when $P$ is an anti-chain.
  \item When the weight $w$ is the Lee weight over $\mathbb{Z}_m$, the $(P,\pi,w)$-weight is the pomset block weight. In particular, pomset block metric deduces pomset metric when $\pi(i)=1$ for all $i\in [s]$. Particularly, pomset metric deduces Lee metric when $\mathbb{P}=(M,R)$ with $R=\left\{\left(\left\lfloor\frac{m}{2}\right\rfloor/a, \left\lfloor\frac{m}{2}\right\rfloor/a\right):1\leq a\leq s\right\}$.
\end{itemize}
The diagram 1 illustrates these facts.

\section{Linear isometries for weighted poset block spaces}

\quad\; In this section, we always assume that $w$ is a weight on $\mathbb{F}_q$, $P=([s],\leq)$ is a poset, $\pi: [s]\rightarrow \mathbb{N}$ is a labeling of the poset $P$ and $V=\bigoplus\limits_{i=1}^s\mathbb{F}_q^{k_i}$ which is isomorphic to $\mathbb{F}_q^n$.

\begin{Definition}
Let $\left(V,d_{w,(P,\pi)}\right)$ be a weighted poset block space. A linear isometry $T$ of $\left(V,d_{w,(P,\pi)}\right)$ is a linear transformation $T: V\rightarrow V$ such that
$$d_{w,(P,\pi)}\left(T(u),T(v)\right)=d_{w,(P,\pi)}(u,v)$$
for all $u,v\in V$. We also call a linear isometry as a $(P,\pi,w)$-isometry.
\end{Definition}

\begin{remark}
A linear transformation $T: V\rightarrow V$ is a linear isometry of $\left(V,d_{w,(P,\pi)}\right)$ if and only if $\overline{\omega}_{w,(P,\pi)}(T(u))=\overline{\omega}_{w,(P,\pi)}(u)$ for all $u\in V$.
\end{remark}

\begin{remark}
We claim that a $(P,\pi,w)$-isometry of $\left(V,d_{w,(P,\pi)}\right)$ is a bijection and its inverse is also a $(P,\pi,w)$-isometry. In fact, if $T$ is a $(P,\pi,w)$-isometry such that $T(u)=T(v)$ for $u,v\in V$ with $u\neq v$ then
$$d_{w,(P,\pi)}(u,v)=d_{w,(P,\pi)}(T(u),T(v))=0,$$
a contradiction. Therefore $T$ is injective. Since $V$ is a finite set, we have that $T$ is bijective. Let $T^{-1}$ be the inverse of $T$. For all $u,v\in V$, we have $$d_{w,(P,\pi)}\left(T^{-1}(u),T^{-1}(v)\right)= d_{w,(P,\pi)}\left(TT^{-1}(u),TT^{-1}(v)\right)=d_{w,(P,\pi)}(u,v).$$
\end{remark}

It follows from the above discussion that the set of all $(P,\pi)$-isometries of the weighted poset block space $\left(V,d_{w,(P,\pi)}\right)$ forms a group. We denoted it by $GL_{w,(P,\pi)}(V)$ and call it \emph{the group of linear isometries of} $\left(V,d_{w,(P,\pi)}\right)$.

We denote by $B=\{e_{11},\ldots,e_{1k_1}, e_{21},\ldots,e_{2k_2},\ldots,e_{s_1},\ldots,e_{sk_s}\}$ be a canonical basis of $V$ which is an $\mathbb{F}_q$-linear space of dimension $n$. Note that the set $B_i=\{e_{i_1},\ldots,e_{ik_i}\}$ forms a canonical basis of $V_i$.

Given $Q\subseteq P$, set:
$$V_Q=\{v\in V:supp_{\pi}(v)\subseteq Q\}.$$

\begin{Definition}
Let $\pi: [s]\rightarrow \mathbb{N}$ be a label and $P=([s],\leq)$ be a poset. An automorphism $\varphi\in Aut(P)$ is called a $(P,\pi)$-automorphism if, for all $i\in [s]$,
$$k_{\varphi(i)}=\pi(\varphi(i))=\pi(i)=k_i.$$
We denote by $Aut(P,\pi)$ the group of $(P,\pi)$-automorphisms.
\end{Definition}

\begin{theorem}\label{section}
Let $\varphi$ be a $(P,\pi)$-automorphism on $P$. Then the linear mapping $T_{\varphi}:V\rightarrow V$ given by
$$T_{\varphi}(e_{ij})=e_{\varphi(i)j}$$
is a $(P,\pi,w)$-isometry of $\left(V,d_{w,(P,\pi)}\right)$. Furthermore, the map
$$\Psi: Aut(P,\pi)\rightarrow GL_{w,(P,\pi)}(V)$$
defined by $\varphi\rightarrow T_{\varphi}$ is an injective group homomorphism.
\end{theorem}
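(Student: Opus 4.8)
The plan is to verify the two assertions in turn: first that each $T_\varphi$ is a genuine $(P,\pi,w)$-isometry, and then that $\Psi$ is an injective group homomorphism. For the isometry claim, by Remark 3.1 it suffices to show $\overline{\omega}_{w,(P,\pi)}(T_\varphi(u))=\overline{\omega}_{w,(P,\pi)}(u)$ for every $u\in V$. The key observation is that $T_\varphi$ permutes the blocks according to $\varphi$ while preserving each block as a whole: since $\varphi$ is a $(P,\pi)$-automorphism we have $k_{\varphi(i)}=k_i$, so $T_\varphi$ maps the block $V_i$ isomorphically onto $V_{\varphi(i)}$ by $(T_\varphi(u))_{\varphi(i)}=u_i$ (merely relabeling which coordinate block holds the vector, without touching the field entries). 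Consequently the local weights are matched exactly, $W_{\varphi(i)}(T_\varphi(u))=W_i(u)$ for all $i$, because the multiset of field values $\{u_{ij}\}_j$ is carried verbatim to block $\varphi(i)$.

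The next step is to check that $\varphi$ transports the combinatorial data entering the weight. I would show first that $supp_\pi(T_\varphi(u))=\varphi(supp_\pi(u))$, which is immediate from the block-preserving description above. Since $\varphi$ is an order-automorphism, it maps ideals to ideals and preserves the generated-ideal operation, so $I^P_{T_\varphi(u)}=\varphi(I^P_u)$; likewise $\varphi$ preserves maximality, giving $M^P_{T_\varphi(u)}=\varphi(M^P_u)$. Putting these together, both sums defining $\overline{\omega}_{w,(P,\pi)}(T_\varphi(u))$ are reindexed copies of the sums for $u$: the first sum $\sum_{i\in M^P_{T_\varphi(u)}}W_i(T_\varphi(u))$ becomes $\sum_{i\in M^P_u}W_{\varphi(i)}(T_\varphi(u))=\sum_{i\in M^P_u}W_i(u)$, and the second sum over $I^P\setminus M^P$ has the same cardinality and hence the same contribution of $M_w$ per element. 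This yields the desired equality of weights.

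Finally I would handle the homomorphism and injectivity statements, which are largely formal. For the homomorphism property, a direct computation on basis elements shows $T_\varphi\circ T_\psi(e_{ij})=T_\varphi(e_{\psi(i)j})=e_{\varphi(\psi(i))j}=T_{\varphi\psi}(e_{ij})$, so $\Psi(\varphi)\Psi(\psi)=\Psi(\varphi\psi)$; since the two linear maps agree on the canonical basis $B$, they agree on all of $V$. For injectivity, if $T_\varphi=\mathrm{id}_V$ then $e_{\varphi(i)j}=e_{ij}$ for all $i,j$, which forces $\varphi(i)=i$ for every $i$ and hence $\varphi$ is the identity automorphism; thus $\ker\Psi$ is trivial.

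I do not expect a serious obstacle here; the one point requiring genuine care is the claim that an order-automorphism commutes with the ideal and maximal-element operations, i.e. $I^P_{T_\varphi(u)}=\varphi(I^P_u)$ and $M^P_{T_\varphi(u)}=\varphi(M^P_u)$. Verifying these rests only on the definition of order-isomorphism ($x\le y \iff \varphi(x)\le\varphi(y)$), but it is the step where the order-theoretic structure is actually used, and it is what makes the reindexing of the two sums legitimate.
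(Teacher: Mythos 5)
Your proposal is correct and follows essentially the same route as the paper: show that $T_\varphi$ permutes blocks so that $supp_\pi(T_\varphi(u))=\varphi(supp_\pi(u))$, transport the ideal and its maximal elements under the order-automorphism, reindex the two sums, and then verify the homomorphism property on basis vectors and triviality of the kernel. No gaps.
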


\begin{proof}
Take $v=\sum\limits_{i=1}^s\sum\limits_{j=1}^{k_i}a_{ij}e_{ij}\in V$. Then $$T_{\varphi}(v)=\sum\limits_{i=1}^s\sum\limits_{j=1}^{k_i}a_{ij}T_{\varphi}(e_{ij})= \sum\limits_{i=1}^s\sum\limits_{j=1}^{k_i}a_{ij}e_{\varphi(i)j},$$
which implies that
$$supp_{\pi}(T_{\varphi}(v))=\left\{\varphi(i)\in P: a_{ij}\neq 0\ \text{for some}\ 1\leq j\leq k_i\right\}=\left\{\varphi(i)\in P:i\in supp_{\pi}(v)\right\}.$$
Therefore $I_{T_{\varphi}(v)}^P=\varphi(I_v^P)$. It follows from the fact that $\varphi$ is an order automorphism that $M_{T_{\varphi}(v)}^P=\varphi(M_v^P)$. Then
\begin{eqnarray*}
\overline{\omega}_{w,(P,\pi)}(T_{\varphi}(v))&=&\sum\limits_{i\in M_{T_{\varphi}(v)}^P}W_i(T_{\varphi}(v))+\sum\limits_{i\in I_{T_{\varphi}(v)}^p\setminus M_{T_{\varphi}(v)}^P}M_w\\
&=&\sum\limits_{i=\varphi(r),r\in M_v^P}W_i(T_{\varphi}(v))+\sum\limits_{i\in I_{T_{\varphi}(v)}^p\setminus M_{T_{\varphi}(v)}^P}M_w\\
&=&\sum\limits_{i=\varphi(r),r\in M_v^P}\max\{w(a_{rj}):1\leq j\leq k_r\}+\sum\limits_{i\in T_v^P\setminus M_v^p}M_w\\
&=&\sum\limits_{r\in M_v^P}W_r(v)+\sum\limits_{i\in T_v^P\setminus M_v^p}M_w\\
&=&\overline{\omega}_{w,(P,\pi)}(v).
\end{eqnarray*}
Hence $T_{\varphi}$ is a $(P,\pi,w)$-isometry. We next show that $\Psi$ is a group homomorphism. Let $\varphi$, $\psi$ be two $(P,\pi)$-automorphisms on $P$. One has
$$T_{\varphi\psi}(e_{ij})=e_{\varphi\psi(i)j}=T_{\varphi}(e_{\psi(i)j})= T_{\varphi}T_{\psi}(e_{ij})$$
and the map $\Psi$ is injective can be easily obtained by its definition.
\end{proof}

\begin{proposition}\label{T}
Let $T:V\rightarrow V$ be a linear isomorphism such that for every $i\in [s]$ and $v_i\in V_i$,
$$T(v_i)=u_i+\gamma_i$$
where $u_i\in V_i$, $\gamma_i\in V_{\langle i\rangle ^*}$ and $W_i(v_i)=W_i(u_i)$. Then $T$ is a $(P,\pi,w)$-isometry of $\left(V,d_{w,(P,\pi)}\right)$.
\end{proposition}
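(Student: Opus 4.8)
The plan is to invoke the remark of this section stating that a linear transformation is a $(P,\pi,w)$-isometry if and only if it preserves the $(P,\pi,w)$-weight; since $T$ is assumed to be a linear isomorphism, bijectivity is automatic, so it suffices to prove $\overline{\omega}_{w,(P,\pi)}(T(v))=\overline{\omega}_{w,(P,\pi)}(v)$ for every $v\in V$. I would fix an arbitrary $v=\sum_{i=1}^{s}v_i$ with $v_i\in V_i$ and, applying the hypothesis blockwise, write $T(v_i)=u_i+\gamma_i$ with $u_i\in V_i$, $\gamma_i\in V_{\langle i\rangle^{*}}$ and $W_i(v_i)=W_i(u_i)$, so that $T(v)=\sum_{i=1}^{s}(u_i+\gamma_i)$.

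The crucial structural observation is that the condition $\gamma_k\in V_{\langle i\rangle^{*}}$ forces a lower-triangular behaviour: because $\gamma_k$ is supported on $\langle k\rangle^{*}=\{j:j<k\}$, its $i$-th block can be nonzero only when $i<k$. Hence, for $i\in supp_{\pi}(v)$, the $i$-th block of $T(v)$ is $(T(v))_i=u_i+\sum_{k\in supp_{\pi}(v),\,i<k}(\gamma_k)_i$, and for $i\notin supp_{\pi}(v)$ the $u_i$ term is absent. Two consequences follow. First, every $T(v_i)$ is supported in $\langle i\rangle$, so $supp_{\pi}(T(v))\subseteq\bigcup_{i\in supp_{\pi}(v)}\langle i\rangle=I_v^P$, whence $I_{T(v)}^P\subseteq I_v^P$. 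Second, if $i$ is a maximal element of $supp_{\pi}(v)$, there is no $k\in supp_{\pi}(v)$ with $i<k$, so the triangular sum vanishes and $(T(v))_i=u_i$ exactly.

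Next I would pin down the maximal set. Since $I_v^P$ is the down-closure of $supp_{\pi}(v)$, its maximal elements coincide with the maximal elements of $supp_{\pi}(v)$, namely $M_v^P$. For each $i\in M_v^P$ the previous step gives $(T(v))_i=u_i$, so $W_i(T(v))=W_i(u_i)=W_i(v)$ by hypothesis; moreover $v_i\neq0$ forces $W_i(v_i)>0$ (positivity of $w$ on nonzero elements), hence $u_i\neq0$ and $i\in supp_{\pi}(T(v))$. Thus $M_v^P\subseteq supp_{\pi}(T(v))$. Passing to generated ideals and using that in a finite poset the ideal generated by the maximal elements of $I_v^P$ is all of $I_v^P$, I obtain $I_v^P=\langle M_v^P\rangle\subseteq I_{T(v)}^P$, which combined with the inclusion $I_{T(v)}^P\subseteq I_v^P$ yields $I_{T(v)}^P=I_v^P$ and therefore $M_{T(v)}^P=M_v^P$.

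Substituting these identities into the weight formula finishes the argument: the sum $\sum_{i\in I\setminus M}M_w$ is unchanged because both the ideal and the maximal set agree, while the leading sum is preserved since $W_i(T(v))=W_i(v)$ for each $i\in M_v^P$, giving $\overline{\omega}_{w,(P,\pi)}(T(v))=\overline{\omega}_{w,(P,\pi)}(v)$. The main obstacle is the combination in the second and third steps: one must argue carefully that the lower-triangular perturbations $\gamma_k$ never disturb the blocks sitting at the maximal positions, and that controlling only those maximal blocks already determines the entire ideal $I_{T(v)}^P$ and its maximal set $M_{T(v)}^P$. The positivity of $w$ on nonzero scalars, used to transfer $v_i\neq0$ to $u_i\neq0$, is the one place where the weight axioms enter beyond the stated hypothesis $W_i(v_i)=W_i(u_i)$.
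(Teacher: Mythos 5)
Your proposal is correct and follows essentially the same route as the paper: decompose $v$ blockwise, observe that each perturbation $\gamma_k$ lives strictly below $k$ so that the blocks of $T(v)$ at the maximal positions of $supp_{\pi}(v)$ are exactly the $u_i$, deduce $I_{T(v)}^P=I_v^P$ and $M_{T(v)}^P=M_v^P$, and compare the two weight sums term by term. The only (harmless) difference is in the bookkeeping: the paper establishes $M_{T(v)}^P\subseteq M_v^P$ by a direct element-wise argument, whereas you obtain the equality of ideals slightly more cleanly from the two containments $I_{T(v)}^P\subseteq I_v^P$ and $I_v^P=\langle M_v^P\rangle\subseteq I_{T(v)}^P$.
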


\begin{proof}
Let $v=v_1+v_2+\cdots+v_s\in V$ where $v_i\in V_i$. Then
$$T(v)=(u_1+\gamma_1)+\cdots+(u_s+\gamma_s)$$
where $\gamma_i\in V_{\langle i\rangle^*}$ and $u_i\in V_i$ such that $W_i(u_i)=W_i(v_i)$. Note that whenever $v_i\neq\textbf{0}$, we have $u_i\neq\textbf{0}$. Decompose $\gamma_i$ as
$$\gamma_i=\gamma_i^1+\cdots+\gamma_i^s$$
where $\gamma_i^l\in V_l$. As $\gamma_i\in V_{\langle i\rangle^*}$, $\gamma_i^l\neq \textbf{0}$ implies that $l<i$ in $P$. Then
$$T(v)=\sum\limits_{i=1}^s(u_i+(\gamma_1^i+\cdots+\gamma_s^i)).$$

Suppose that $i\in M_v^P\subseteq supp_{\pi}(v)$ and $\gamma_k^i\neq \textbf{0}$ for some $k\in[s]$. Then $k\in supp_{\pi}(v)$ and hence $i<k$ in $P$, a contradiction to the fact that $i$ is a maximal element of $supp_{\pi}(v)$. Therefore $\gamma_k^i=\textbf{0}$ for all $k\in[s]$ and the $i$-th component of $T(v)$ is
$$u_i+(\gamma_1^i+\cdots+\gamma_s^i)=u_i.$$
If $i\notin supp_{\pi}(T(v))$ then $u_i=\textbf{0}$. But $W_i(u_i)=W_i(v_i)=0$ implies that $v_i=\textbf{0}$, a contradiction. Hence $i\in M_{T(v)}^P\subseteq supp_{\pi}(T(v))$.

On the other hand, $M_{T(v)}^P\subseteq M_v^P$. In fact, if $j\in M_{T(v)}^P$ then the $j$-th component of $T(v)$ is
$$u_j+(\gamma_1^j+\cdots+\gamma_s^j).$$
If $\gamma_l^j\neq \textbf{0}$ then $l\in supp_{\pi}(v)$ and $j<l\leq i$ for some $i\in M_v^P\subseteq supp_{\pi}(T(v))$, a contradiction. Thus $\gamma_l^j=\textbf{0}$ for all $1\leq l\leq s$ and $u_j\neq \textbf{0}$. Note that $W_j(v_j)=W_j(u_j)\neq0$ implies that $v_j\neq\textbf{0}$. Therefore $j\in supp_{\pi}(v)$. If $j\notin M_v^P$, then $j<i$ for some $i\in M_v^P\subseteq supp_{\pi}(T(v))$, a contradiction. Hence $M_{T(v)}^P=M_v^P$ and thus $I_{T(v)}^P=I_v^P$.

As the $i$-th component of $T(v)$ is $u_i$ such that $W_i(v_i)=W_i(u_i)$ for all $i\in M_{T(v)}^P$, we have
\begin{eqnarray*}
\overline{\omega}_{w,(P,\pi)}(T(v))&=&\sum\limits_{i\in M_{T(v)}^P}W_i(T(v))+\sum\limits_{i\in I_{T(v)}^P\setminus M_{T(v)}^P}M_w\\
&=&\sum\limits_{i\in M_{T(v)}^P}W_i(u_i)+\sum\limits_{i\in I_{T(v)}^P\setminus M_{T(v)}^P}M_w\\
&=&\sum\limits_{i\in M_v^P}W_i(v_i)+\sum\limits_{i\in I_v^P\setminus M_v^P} M_w\\
&=&\overline{\omega}_{w,(P,\pi)}(v).
\end{eqnarray*}
This completes our proof.
\end{proof}

Let $B_i=\{e_{i1},e_{i2},\ldots,e_{ik_i}\}$ be a canonical basis of $V_i$. Then $\overline{\omega}_{w,(P,\pi)}(e_{i1})=\cdots=\overline{\omega}_{w,(P,\pi)}(e_{ik_i}) =w(1)+M_w\cdot |\langle i\rangle^*|$. We set $\overline{\omega}_{w,(P,\pi)}(B_i)=w(1)+M_w\cdot|\langle i\rangle^*|$.
Let $B=(B_{i_1},B_{i_2},\ldots,B_{i_s})$ be a total ordering of the basis of $V$ such that $B_{i_r}$ appears before $B_{i_l}$ wherever $\overline{\omega}_{w,(P,\pi)}(B_{i_r})\leq \overline{\omega}_{w,(P,\pi)}(B_{i_l})$ for all $i_r, i_l\in [s]$. Without loss of generality, we suppose that $B=\{B_1,B_2,\ldots,B_s\}$ is a total ordering basis of $V$. Then $|\langle r\rangle|<|\langle l\rangle|$ follows that all elements of $B_r$ come before $B_l$.

Let $\mathcal {T}$ be the set of mappings defined in Proposition \ref{T}.

\begin{corollary}
Let $B=(B_1,B_2,\ldots,B_s\}$ be a canonical base of $V$ defined as above. Given $T\in\mathcal {T}$, we have
$$T(e_{ij})=\sum\limits_{r\leq i}\sum\limits_{t=1}^{k_r}a_{ij}^{rt}e_{st}.$$
Moreover, $T$ can be represented by an $n\times n$ upper triangular block matrix with respect to $B$ as following
\begin{center}
  $[T]_B=\left[
     \begin{array}{ccccc}
       [T]_{B_1}^1 & [T]_{B_2}^1 & [T]_{B_3}^1&\cdots & [T]_{B_s}^1\\[1mm]
       O & [T]_{B_2}^2& [T]_{B_3}^2&\cdots & [T]_{B_s}^2 \\[1mm]
       O& O&[T]_{B_3}^3&\cdots&[T]_{B_s}^3\\[1mm]
       \vdots&\vdots&\vdots&\ddots&\vdots\\[1mm]
       O&O&O&\cdots&[T]_{B_s}^s
     \end{array}
   \right]$
\end{center}
where
\begin{center}
  $[T]_{B_r}^t=\left[
     \begin{array}{cccc}
     a_{r1}^{t1}& a_{r2}^{t1}&\cdots&a_{rk_r}^{t1}\\
     a_{r1}^{t2}& a_{r2}^{t2}&\cdots&a_{rk_r}^{t2}\\
     \vdots&\vdots&\ddots&\vdots\\
     a_{r1}^{tk_t}&a_{r2}^{tk_t}&\ldots&a_{rk_r}^{tk_t}\\
     \end{array}
   \right]$
\end{center}
and the element $v_{rl}=(a_{r_l}^{r_1},a_{r_l}^{r_2},\ldots,a_{rl}^{rk_r})\in V_r$ such that $W_r(v_{rl})=w(1)$ for all $1\leq l\leq k_r$, $W_r(\beta_1 v_{r1}+\cdots+\beta_{k_r}v_{rk_r})=W_r(\beta)$ where $\beta=(\beta_1,\ldots,\beta_{k_r})\in V_r$.
\end{corollary}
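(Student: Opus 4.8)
The plan is to treat this corollary as a direct transcription of Proposition \ref{T} into matrix form relative to the weight-ordered basis $B$, so that no new analytic input is needed beyond the defining properties of $\mathcal{T}$ and one order-theoretic observation. First I would record the coordinate expansion. For $T \in \mathcal{T}$ and a basis vector $e_{ij} \in V_i$, the definition of $\mathcal{T}$ gives $T(e_{ij}) = u_{ij} + \gamma_{ij}$ with $u_{ij} \in V_i$, $\gamma_{ij} \in V_{\langle i\rangle^*}$, and $W_i(u_{ij}) = W_i(e_{ij}) = w(1)$. Since $\langle i\rangle^* = \{r \in P : r < i\}$ and $V_{\langle i\rangle^*} = \{v : supp_{\pi}(v) \subseteq \langle i\rangle^*\}$, the component $\gamma_{ij}$ is supported on the blocks $r < i$ while $u_{ij}$ sits in block $i$; collecting both contributions in the basis produces $T(e_{ij}) = \sum_{r \leq i}\sum_{t=1}^{k_r} a_{ij}^{rt} e_{rt}$ (the subscript $s$ in the displayed formula should read $r$), which is the asserted expansion and identifies the $(l,j)$-entry of the $(t,r)$ block as the coefficient $a_{rj}^{tl}$ of $e_{tl}$ in $T(e_{rj})$.

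Next I would deduce the block upper-triangular shape, which is the one step that is not purely formal. The point is to see that the partial order on $P$ refines into the chosen total order of $B$: if $r < i$ in $P$, then every $j \leq r$ satisfies $j < i$ by transitivity, so $\langle r\rangle \subseteq \langle i\rangle$, while $i \in \langle i\rangle \setminus \langle r\rangle$; hence $\langle r\rangle \subsetneq \langle i\rangle$ and $|\langle r\rangle| < |\langle i\rangle|$. Because $\overline{\omega}_{w,(P,\pi)}(B_i) = w(1) + M_w \cdot |\langle i\rangle^*|$ is strictly increasing in $|\langle i\rangle|$, this yields $\overline{\omega}_{w,(P,\pi)}(B_r) < \overline{\omega}_{w,(P,\pi)}(B_i)$, so in the ordered basis $B_r$ precedes $B_i$; that is, $r < i$ in $P$ forces $r < i$ as relabeled indices. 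Consequently the expansion of $T(e_{rj})$ involves only row-blocks $t \leq r$, so $[T]_{B_r}^t = O$ whenever $t > r$, giving exactly the displayed upper-triangular pattern.

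Finally I would verify the two assertions about the diagonal blocks. The vector $v_{rl} = (a_{rl}^{r1}, \ldots, a_{rl}^{rk_r})$ is by construction the $V_r$-component of $T(e_{rl})$, i.e. the vector $u_{rl}$ above, so the defining identity of $\mathcal{T}$ immediately gives $W_r(v_{rl}) = W_r(e_{rl}) = w(1)$. For an arbitrary $\beta = (\beta_1, \ldots, \beta_{k_r}) = \sum_l \beta_l e_{rl} \in V_r$, linearity shows that the $V_r$-component of $T(\beta)$ equals $\sum_l \beta_l v_{rl}$, and applying the condition $W_r(u) = W_r(\beta)$ of Proposition \ref{T} to this $\beta$ yields $W_r(\beta_1 v_{r1} + \cdots + \beta_{k_r} v_{rk_r}) = W_r(\beta)$, as required.

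The main obstacle is the compatibility argument in the second paragraph: everything else is bookkeeping, but the upper-triangularity rests entirely on upgrading $r < i$ in $P$ to the strict inequality $|\langle r\rangle| < |\langle i\rangle|$ and then invoking the weight-ordering of $B$, and one must take care that it is the strictness—not merely the inclusion $\langle r\rangle \subseteq \langle i\rangle$—that guarantees $B_r$ is placed before $B_i$ and hence that the sub-diagonal blocks vanish.
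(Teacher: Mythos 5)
Your proof is correct and follows exactly the route the paper intends: the paper states this corollary without proof as a direct transcription of Proposition \ref{T}, and your argument supplies the omitted details in the natural way (reading off the coefficients from $T(e_{ij})=u_{ij}+\gamma_{ij}$, using $r<_P i\Rightarrow|\langle r\rangle|<|\langle i\rangle|$ together with the weight-ordering of $B$ to get block upper-triangularity, and deducing the diagonal-block conditions from $W_i(u_i)=W_i(v_i)$). Your observations that the displayed $e_{st}$ should read $e_{rt}$ and that strictness of $|\langle r\rangle|<|\langle i\rangle|$ is what forces $B_r$ to precede $B_i$ are both apt.
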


For each $i\in P$, the set $\langle i\rangle$ is an ideal and it is known as the \emph{principal ideal} generated by $i$.

\begin{theorem}\label{prime}
Let $T$ be a $(P,\pi,w)$-isometry of $\left(V,d_{w,(P,\pi)}\right)$. Then for every $i\in[s]$ and $\textbf{0}\neq v_i\in V_i$, we have that $\langle supp_{\pi}(T(v_i))\rangle$ is a principal ideal.
\end{theorem}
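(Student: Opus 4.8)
The goal is to show that $I_u^P := \langle supp_{\pi}(T(v_i))\rangle$ has a \emph{unique} maximal element, where $u := T(v_i)$; since in a finite poset an ideal is principal if and only if it has exactly one maximal element, this is precisely the assertion to be proved. The plan is to use not merely that $T$ preserves the weight of the single vector $v_i$, but that $T$ is $\mathbb{F}_q$-linear, so it carries the whole scalar line $\mathbb{F}_q v_i$ isometrically onto the line $\mathbb{F}_q u$ (indeed $T(\beta v_i)=\beta u$). Every nonzero scalar multiple has the same $\pi$-support, so $I_{\beta v_i}^P=\langle i\rangle$ (with unique maximal element $i$) and $I_{\beta u}^P=I_u^P$, $M_{\beta u}^P=M_u^P$ are all constant in $\beta$; thus the two weight multisets along the lines are $\bigl\{W_i(\beta v_i)+M_w(|\langle i\rangle|-1)\bigr\}_{\beta\neq 0}$ and $\bigl\{\sum_{j\in M_u^P}W_j(\beta u)+M_w(|I_u^P|-|M_u^P|)\bigr\}_{\beta\neq 0}$, and they must coincide.

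First I would extract the cheap consequences. Since $\beta v_{it}$ runs over all of $\mathbb{F}_q^{\ast}$ for any fixed nonzero coordinate of $v_i$, one has $\max_\beta W_i(\beta v_i)=M_w$, so the maximum weight on either line equals $M_w|\langle i\rangle|$; comparing maxima already forces $|\langle i\rangle|\le |I_u^P|$. Writing $D:=|I_u^P|-|\langle i\rangle|\ge 0$, the equality of multisets says that $\bigl\{\sum_{j\in M_u^P}W_j(\beta u)\bigr\}_\beta$ is $\bigl\{W_i(\beta v_i)\bigr\}_\beta$ translated by $M_w\bigl(|M_u^P|-1-D\bigr)$, i.e. by an integer multiple of $M_w$. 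Comparing the extreme values and the first moments of the two multisets then yields numerical constraints, the cleanest being $(M_w-m_w)(|M_u^P|-1)\ge M_wD$ (and, in the single-coordinate case, the identity $(|M_u^P|-1)\sum_{\alpha\neq 0}(M_w-w(\alpha))=(q-1)M_wD$).

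The heart of the matter is to rule out $|M_u^P|\ge 2$. Assuming this, $\beta\mapsto\sum_{j\in M_u^P}W_j(\beta u)$ is a sum of at least two independently scaled copies of the field weight function, and I would argue that its multiset of attained values cannot be a single block profile $\{W_i(\beta v_i)\}_\beta$ shifted by a multiple of $M_w$. A convenient contradiction is a divisibility one: the shift forced by matching the minima, namely $\min_\beta\sum_{j}W_j(\beta u)-\min_\beta W_i(\beta v_i)$, is in general \emph{not} a multiple of $M_w$, whereas $M_w(|M_u^P|-1-D)$ must be (for instance, for a two–block top with Lee weight modulo $7$ the required shift equals $2$, strictly between the consecutive multiples $0$ and $3=M_w$). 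This closes the argument whenever $w$ is nonconstant on $\mathbb{F}_q^{\ast}$.

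The step I expect to be the genuine obstacle is the degenerate case $m_w=M_w$, i.e. $w$ constant on $\mathbb{F}_q^{\ast}$, which includes the Hamming weight and hence the poset block metric. There $W_j(\beta u)\equiv M_w$ on every nonzero block, the scalar-line multiset collapses to one repeated value, and all the inequalities above degenerate to $D=0$ — giving $|I_u^P|=|\langle i\rangle|$ but \emph{not} $|M_u^P|=1$. To treat this I would abandon pure weight preservation and use genuinely metric information: for a suitable radius, the number of minimal-weight ("atomic") vectors $y$ with $\overline{\omega}_{w,(P,\pi)}(u-y)<\overline{\omega}_{w,(P,\pi)}(u)$ is an isometry invariant, and one checks that it equals the number of maximal elements of $I_u^P$; matching it against the single maximal element of $\langle i\rangle$ then forces $|M_u^P|=1$. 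Verifying that this count is controlled by $|M_u^P|$ uniformly in the block labels $\pi$ and in $w$ is the delicate point, and is where I anticipate the real work lies.
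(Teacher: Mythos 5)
There is a genuine gap, and you have in fact located it yourself: your main argument --- comparing the multisets of weights along the scalar lines $\mathbb{F}_q v_i$ and $\mathbb{F}_q T(v_i)$ --- provably cannot decide the question when $w$ is constant on $\mathbb{F}_q^{*}$ (Hamming weight, hence the poset block case). In that situation both multisets collapse to the single repeated value $M_w|\langle i\rangle|$ whenever $|I_{T(v_i)}^P|=|\langle i\rangle|$, and an ideal of that cardinality with two maximal elements is not excluded; so the line-restricted data carries no information about $|M_{T(v_i)}^P|$. Even in the non-degenerate case your argument is not a proof: the claim that the shift needed to match the minima is ``in general'' not a multiple of $M_w$ is supported by one example (Lee weight mod $7$), and the moment inequality $(M_w-m_w)(|M_{u}^P|-1)\ge M_wD$ is in the wrong direction to rule anything out (it is vacuous when $D=0$). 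Your proposed patch --- counting minimal-weight vectors $y$ with $\overline{\omega}_{w,(P,\pi)}(u-y)<\overline{\omega}_{w,(P,\pi)}(u)$ --- is an isometry invariant, but it does not compute $|M_u^P|$: subtracting a vector supported on a minimal element $j$ of $P$ can lower the weight only when $j$ happens to be \emph{both} minimal in $P$ and maximal in $I_u^P$, so maximal elements of $I_u^P$ that sit higher in $P$ are invisible to this count.

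The idea you are missing is the one the paper uses in its Step 1: apply the isometry property of $T^{-1}$ to the individual block components of the image. Writing $T(\alpha e_{ij})=v_{r_1}+\cdots+v_{r_t}$ with $w(\alpha)=m_w$, if every component had weight strictly less than $\overline{\omega}_{w,(P,\pi)}(\alpha e_{ij})$, then since $\{i\}\subseteq\bigcup_l supp_{\pi}(T^{-1}(v_{r_l}))$ some $T^{-1}(v_{r_l})$ has $i$ in its support, forcing $\overline{\omega}_{w,(P,\pi)}(v_{r_l})\ge w(\beta)+M_w|\langle i\rangle^*|\ge m_w+M_w|\langle i\rangle^*|$, a contradiction; hence one component already carries the full weight and the ideal is the principal ideal it generates. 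This uses vectors off the line $\mathbb{F}_q v_i$ (namely the components $v_{r_l}$ and their preimages), which is exactly the extra information your approach lacks, and it works uniformly for constant and non-constant $w$. The paper then needs two further steps you do not address: showing $I_{T(e_{i1})}^P=\cdots=I_{T(e_{ik_i})}^P$ (via weight preservation on $e_{i1}+e_{i2}$), and showing that for a general $v_i=\sum_l a_{il}e_{il}$ the top block component $\sum_l a_{il}u_l$ does not vanish.
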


\begin{proof}
We present the proof in three steps.
\begin{itemize}
  \item \textbf{Step 1}

 We first prove that $\langle supp_{\pi}(T(e_{ij}))\rangle$ is a principal ideal. Let $\alpha\in\mathbb{F}_q$ such that $w(\alpha)=m_w$. Suppose that
$$T(\alpha e_{ij})=v_{r1}+\cdots+v_{rt}$$
where $\textbf{0}\neq v_{rl}\in V_{rl}$ for all $l\in [t]$. Then $supp_{\pi}(T(\alpha e_{ij}))=\{r1,\ldots,rt\}$. We now prove that there exists $k\in[t]$ such that $\overline{\omega}_{w,(P,\pi)}(v_{rk})=\overline{\omega}_{w,(P,\pi)}(T(\alpha e_{ij}))$. Assume the contrary, namely that for all $l\in [t]$, one has
$$\overline{\omega}_{w,(P,\pi)}(v_{rl})<\overline{\omega}_{w,(P,\pi)} \left(T(\alpha e_{ij})\right)=\overline{\omega}_{w,(P,\pi)}(\alpha e_{ij}).$$

It follows from $T^{-1}$ is a $(P,\pi,w)$-isometry of $\left(V,d_{w,(P,\pi)}\right)$ that
$${i}=supp_{\pi}(\alpha e_{ij})\subseteq \bigcup\limits_{l=1}^{t}supp_{\pi}\left(T^{-1}(v_{rl})\right),$$
which implies that $i\in supp_{\pi}\left(T^{-1}(v_{rk})\right)$ for some $k\in [t]$. Suppose that $W_i\left(T^{-1}(v_{rk})\right)=w(\beta)$ for some $\beta\in\mathbb{F}_q$, then
$$\overline{\omega}_{w,(P,\pi)}(\beta e_{ij})\leq \overline{\omega}_{w,(P,\pi)}\left(T^{-1}(v_{rk})\right)= \overline{\omega}_{w,(P,\pi)}(v_{rk})<\overline{\omega}_{w,(P,\pi)}(\alpha e_{ij}).$$
It follows that $w(\beta)+M_w\cdot |\langle i\rangle^*|<m_w+M_w\cdot|\langle i\rangle^*|$, a contradiction. Therefore, there exists $l\in [t]$ such that $\overline{\omega}_{w,(P,\pi)}(v_{rl})=\overline{\omega}_{w,(P,\pi)}(T(\alpha e_{ij}))$ and $I_{T(\alpha e_{ij})}^P=\langle rl \rangle$ is a principal ideal. Since $supp_{\pi}(T(\alpha e_{ij}))=supp_{\pi}(\alpha T(e_{ij}))=supp_{\pi}(T(e_{ij}))$, the result follows.

\item \textbf{Step 2}

  We next show that $I_{T(e_{i1})}^P=I_{T(e_{i2})}^P=\cdots=I_{T(e_{ik_i})}^P$. It is sufficient to show that $I_{T(e_{i1})}^P=I_{T(e_{i2})}^P$. Suppose that $I_{T(e_{i1})}^P=\langle k\rangle$ and $I_{T(e_{i2})}^P=\langle l\rangle$. Then
$$\left\{
  \begin{array}{ll}
    T(e_{i1})=u_k+\gamma_k & u_k\in V_k, \gamma_k\in V_{\langle k\rangle^*}, \\[2mm]
    T(e_{i2})=u_l+\gamma_l & u_l\in V_l, \gamma_l\in V_{\langle l\rangle^*}.
  \end{array}
\right.$$
Since $T$ preserves $(P,\pi,w)$-weight, we have
$$w(1)+M_w\cdot|\langle i\rangle^*|=\overline{\omega}_{w,(P,\pi)}(e_{i1})= \overline{\omega}_{w,(P,\pi)}(T(e_{i1}))= \overline{\omega}_{w,(P,\pi)}(u_k+\gamma_k)= W_k(u_k)+M_w\cdot|\langle k\rangle^*|.$$
It follows that $(W_k(u_k)-w(1))$ is divisible by $M_w$.
As $0< w(1), W_k(u_k)\leq M_w$, we have that $w(1)=W_k(u_k)$ and $|\langle i\rangle|=|\langle k\rangle|$. One can prove that $W_l(u_l)=w(1)$ and $|\langle i\rangle|=|\langle l\rangle|$ in similar way.

By the $(P,\pi,w)$-weight preservation and linearity of $T$,
\begin{eqnarray*}
w(1)+M_w\cdot|\langle i\rangle^*|&=&\overline{\omega}_{w,(P,\pi)}(e_{i1}+e_{i2})= \overline{\omega}_{w,(P,\pi)}(T(e_{i1})+T(e_{i2}))\\
&=& \overline{\omega}_{w,(P,\pi)}(u_k+\gamma_k+u_l+\gamma_l)\\
&=& \overline{\omega}_{w,(P,\pi)}(u_k+u_l).
\end{eqnarray*}
If $k\nleq l$, then
$$\overline{\omega}_{w,(P,\pi)}(u_k+u_l)\geq W_l(u_l)+M_w\cdot|\langle l\rangle^*|+ W_k(u_k)=2w(1)+ M_w\cdot|\langle i\rangle^*|,$$ a contradiction. Thus $k\leq l$. It follows from $|\langle k\rangle|=|\langle l\rangle|$ that $k=l$.

\item \textbf{Step 3}

Let $\textbf{0}\neq v_i=a_{i1}e_{i1}+\cdots+a_{ik_i}e_{ik_i}\in V_i$. From \textbf{Step 2}, we know that $I_{T(e_{i1})}^P=I_{T(e_{i2})}^P=\cdots=I_{T(e_{ik_i})}^P=\langle j\rangle$ for some $j$ such that $|\langle i\rangle|=|\langle j\rangle|$. Suppose that
$$T(e_{il})=u_l+\gamma_l,\ 1\leq l\leq k_i$$
where $\gamma_l\in V_{\langle j\rangle^*}$ and $u_l\in V_j$ such that $W_j(u_l)=w(1)$. Then
\begin{eqnarray*}
T(v_i)&=&a_{i1}T(e_{i1})+\cdots+a_{ik_i}T(e_{ik_i})\\
&=&a_{i1}(u_1+\gamma_1)+\cdots+a_{ik_i}(u_{k_i}+\gamma_{k_i})\\
&=&(a_{i1}u_1+\cdots+a_{ik_i}u_{k_i})+ (a_{i1}\gamma_1+\cdots+a_{ik_i}\gamma_{k_i}).
\end{eqnarray*}
If $a_{i1}u_1+\cdots+a_{ik_i}u_{k_i}=\textbf{0}\in V_j$ then $\overline{\omega}_{w,(P,\pi)}(T(v_i))\leq M_w\cdot|\langle j\rangle^*|$. On the other hand, $$\overline{\omega}_{w,(P,\pi)}(T(v_i))= \overline{\omega}_{w,(P,\pi)}(v_i)=W_i(v)+M_w\cdot|\langle i\rangle^*|>M_w\cdot|\langle i\rangle^*|=M_w\cdot|\langle j\rangle^*|,$$
a contradiction. Therefore $a_{i1}u_1+\cdots+a_{ik_i}u_{k_i}\neq\textbf{0}$ and hence $I_{T(v_i)}^P=\langle j\rangle$. This completes our proof.
\end{itemize}
\end{proof}

From the proof of Theorem \ref{prime}, we obtain a corollary.

\begin{corollary}\label{equal}
Let $T$ be a $(P,\pi,w)$-isometry of $\left(V,d_{w,(P,\pi)}\right)$. Then, for every $i\in [s]$, there is a unique $j\in[s]$ such that $T(V_i)\subseteq V_{\langle j\rangle}$ and  $|\langle i\rangle|=|\langle j\rangle|$. Moreover, for $\textbf{0}\neq v_i\in V_i$, there is a non-zero vector $u_j\in V_j$ and a vector $\gamma_j\in V_{\langle j\rangle^*}$ such that $T(v_i)=u_j+\gamma_j$ and $W_i(v_i)=W_j(u_j)$.
\end{corollary}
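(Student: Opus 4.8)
The plan is to read off Corollary \ref{equal} directly from the three steps already established in the proof of Theorem \ref{prime}, since every ingredient needed is produced there; the corollary is essentially a repackaging of that proof's output into a clean functorial statement. First I would fix $i\in[s]$ and appeal to Step 3 of Theorem \ref{prime}, which shows that for every nonzero $v_i\in V_i$ the ideal $\langle supp_\pi(T(v_i))\rangle$ equals a fixed principal ideal $\langle j\rangle$, where $j$ is the common value of $I_{T(e_{i1})}^P=\cdots=I_{T(e_{ik_i})}^P$ obtained in Step 2. This immediately gives $supp_\pi(T(v_i))\subseteq\langle j\rangle$ for all nonzero $v_i$, hence $T(V_i)\subseteq V_{\langle j\rangle}$, and Step 2 already records the size equality $|\langle i\rangle|=|\langle j\rangle|$.

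Next I would establish uniqueness of $j$. Since $T$ is a bijective linear isometry and $|\langle i\rangle|=|\langle j\rangle|$ pins down $j$ among the finitely many elements of $[s]$, the natural argument is that if both $\langle j\rangle$ and $\langle j'\rangle$ served as the principal ideal associated to $V_i$, then both equal $I_{T(e_{i1})}^P$, which is a single well-defined ideal, forcing $\langle j\rangle=\langle j'\rangle$ and therefore $j=j'$ by the definition of a principal ideal generator as a maximal element. The decomposition claim then follows by writing, for a fixed nonzero $v_i$, the image as $T(v_i)=u_j+\gamma_j$ with $u_j$ the $V_j$-component and $\gamma_j\in V_{\langle j\rangle^*}$; Step 3 shows precisely that $u_j=a_{i1}u_1+\cdots+a_{ik_i}u_{k_i}\neq\textbf{0}$, so $u_j$ is a genuinely nonzero vector in $V_j$.

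The remaining assertion to verify is the weight equality $W_i(v_i)=W_j(u_j)$. Here I would use $(P,\pi,w)$-weight preservation: since $I_{T(v_i)}^P=\langle j\rangle$ with unique maximal element $j$, one has $\overline{\omega}_{w,(P,\pi)}(T(v_i))=W_j(u_j)+M_w\cdot|\langle j\rangle^*|$, while $\overline{\omega}_{w,(P,\pi)}(v_i)=W_i(v_i)+M_w\cdot|\langle i\rangle^*|$. Equating these two and cancelling the common term $M_w\cdot|\langle i\rangle^*|=M_w\cdot|\langle j\rangle^*|$ (valid because $|\langle i\rangle|=|\langle j\rangle|$ forces $|\langle i\rangle^*|=|\langle j\rangle^*|$) yields $W_i(v_i)=W_j(u_j)$ at once.

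I expect the only genuine subtlety to be the uniqueness of $j$, and more specifically confirming that the ideal $\langle j\rangle$ is independent of the choice of nonzero $v_i\in V_i$ rather than merely independent among basis vectors. This is exactly what Step 3 delivers: it treats an arbitrary $\textbf{0}\neq v_i=\sum_l a_{il}e_{il}$ and shows its image has support ideal equal to the common basis ideal $\langle j\rangle$, so no gap arises. Everything else is bookkeeping: reindexing the already-proved statements, invoking the size equality to cancel the $M_w$-terms, and observing that a principal ideal determines its generating maximal element uniquely. Consequently the corollary requires no new estimates, only a careful assembly of the three steps of Theorem \ref{prime}.
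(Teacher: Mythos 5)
Your proposal is correct and matches the paper's intent exactly: the paper states the corollary as a direct consequence of the proof of Theorem \ref{prime} without writing out further details, and your assembly of Steps 2 and 3 (principal ideal $\langle j\rangle$ independent of the choice of nonzero $v_i$, the size equality $|\langle i\rangle|=|\langle j\rangle|$, nonvanishing of the $V_j$-component, and the weight comparison after cancelling $M_w\cdot|\langle i\rangle^*|=M_w\cdot|\langle j\rangle^*|$) is precisely the intended argument.
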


\begin{theorem}\label{order}
Let $T\in GL_{w,(P,\pi)}(V)$ and $i\leq j\in P$. Let $\textbf{0}\neq v_i\in V_i$ and let $\textbf{0}\neq v_j\in V_j$. Then
$$I_{T(v_i)}^P\subseteq I_{T(v_j)}^P.$$
\end{theorem}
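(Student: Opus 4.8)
The plan is to translate the desired inclusion into a relation between the unique indices attached to $T(v_i)$ and $T(v_j)$ by the earlier results, and then to force that relation by comparing the $(P,\pi,w)$-weight of $v_i+v_j$ with that of its image. By Theorem \ref{prime} and Corollary \ref{equal}, $\langle supp_{\pi}(T(v_i))\rangle$ and $\langle supp_{\pi}(T(v_j))\rangle$ are principal ideals, say $I_{T(v_i)}^P=\langle k\rangle$ and $I_{T(v_j)}^P=\langle l\rangle$, with $|\langle k\rangle|=|\langle i\rangle|$ and $|\langle l\rangle|=|\langle j\rangle|$, and these indices depend only on $i$ and $j$. Thus the theorem is equivalent to $k\le l$. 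If $i=j$ there is nothing to prove, so I would assume $i<j$; then $|\langle k\rangle|=|\langle i\rangle|<|\langle j\rangle|=|\langle l\rangle|$, and in particular $k\ne l$.

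I would argue by contradiction, assuming $k\nleq l$. First I record the weight of $v_i+v_j$: since $i<j$ the block support is $\{i,j\}$, the generated ideal is $\langle j\rangle$ with unique maximal element $j$, so $\overline{\omega}_{w,(P,\pi)}(v_i+v_j)=W_j(v_j)+M_w\cdot|\langle j\rangle^*|$. Writing $T(v_i)=u_k+\gamma_k$ and $T(v_j)=u_l+\gamma_l$ with $\textbf{0}\ne u_k\in V_k$, $\gamma_k\in V_{\langle k\rangle^*}$, $\textbf{0}\ne u_l\in V_l$, $\gamma_l\in V_{\langle l\rangle^*}$ as furnished by Corollary \ref{equal}, I would then analyse $T(v_i)+T(v_j)=u_k+u_l+\gamma_k+\gamma_l$ coordinate block by coordinate block.

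The crux is to pin down the support and the maximal set of $T(v_i)+T(v_j)$. The cardinality bound $|\langle k\rangle|<|\langle l\rangle|$ rules out $l<k$, so $k$ and $l$ are incomparable; hence $k\notin\langle l\rangle^*$ and $l\notin\langle k\rangle^*$, which means $\gamma_l$ makes no contribution to the $k$-block and $\gamma_k$ none to the $l$-block. Since $u_l\in V_l$ and $u_k\in V_k$ with $k\ne l$, the $k$-block of $T(v_i)+T(v_j)$ is exactly $u_k\ne\textbf{0}$ and the $l$-block is exactly $u_l\ne\textbf{0}$. As $supp_{\pi}(T(v_i)+T(v_j))\subseteq\langle k\rangle\cup\langle l\rangle$ and $k,l$ are incomparable, this yields $I_{T(v_i)+T(v_j)}^P=\langle k\rangle\cup\langle l\rangle$ and $M_{T(v_i)+T(v_j)}^P=\{k,l\}$. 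Using $W_k(u_k)=W_i(v_i)$ and $W_l(u_l)=W_j(v_j)$, the weight formula gives $\overline{\omega}_{w,(P,\pi)}(T(v_i)+T(v_j))=W_i(v_i)+W_j(v_j)+M_w\cdot(|\langle k\rangle\cup\langle l\rangle|-2)$. Because $k\notin\langle l\rangle$, we have $|\langle k\rangle\cup\langle l\rangle|\ge|\langle l\rangle|+1=|\langle j\rangle|+1$, so the right-hand side is at least $W_i(v_i)+W_j(v_j)+M_w\cdot|\langle j\rangle^*|$, which strictly exceeds $W_j(v_j)+M_w\cdot|\langle j\rangle^*|=\overline{\omega}_{w,(P,\pi)}(v_i+v_j)$ since $W_i(v_i)\ge m_w>0$. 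This contradicts the isometry property of $T$, so $k\le l$ and $I_{T(v_i)}^P=\langle k\rangle\subseteq\langle l\rangle=I_{T(v_j)}^P$.

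The main obstacle I anticipate is the bookkeeping in the third paragraph: correctly identifying which blocks of the perturbed image $u_k+u_l+\gamma_k+\gamma_l$ are nonzero and confirming that $k$ and $l$ are genuinely the maximal elements of the generated ideal. This is precisely where incomparability of $k$ and $l$ (forced by $|\langle k\rangle|<|\langle l\rangle|$) must be combined with the fact that the perturbations live in $V_{\langle k\rangle^*}$ and $V_{\langle l\rangle^*}$. Once the support and maximal set of $T(v_i)+T(v_j)$ are determined, the contradiction reduces to the short counting estimate $|\langle k\rangle\cup\langle l\rangle|\ge|\langle l\rangle|+1$.
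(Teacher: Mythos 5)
Your proof is correct, and it is organized rather differently from the paper's. The paper also reduces to the principal ideals $\langle k\rangle=I_{T(e_{i1})}^P$ and $\langle l\rangle=I_{T(e_{j1})}^P$ via Theorem \ref{prime} and Corollary \ref{equal}, but then runs a three-case analysis on which of $k,l$ lies in $supp_{\pi}\bigl(T(e_{i1})-T(e_{j1})\bigr)$, extracting the conclusion or a contradiction separately in each case (the last case splitting further into subcases on whether $(T(e_{i1}))_l$ or $(T(e_{j1}))_k$ vanishes). You instead assume $k\nleq l$ outright and use the cardinality identities $|\langle k\rangle|=|\langle i\rangle|<|\langle j\rangle|=|\langle l\rangle|$ to force $k$ and $l$ to be incomparable; this pins down the support, the generated ideal $\langle k\rangle\cup\langle l\rangle$, and the maximal set $\{k,l\}$ of $T(v_i)+T(v_j)$ exactly, so a single weight computation, compared against $\overline{\omega}_{w,(P,\pi)}(v_i+v_j)=W_j(v_j)+M_w\cdot|\langle j\rangle^*|$ via the estimate $|\langle k\rangle\cup\langle l\rangle|\geq|\langle l\rangle|+1$, yields the contradiction in one stroke. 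Your version buys a cleaner, case-free argument (at the mild cost of treating $i=j$ separately, which Corollary \ref{equal} disposes of immediately); the paper's version extracts slightly more local information along the way (e.g.\ when the $k$-components of the two images coincide) but is longer and harder to audit. All steps in your argument check out: the perturbations $\gamma_k\in V_{\langle k\rangle^*}$ and $\gamma_l\in V_{\langle l\rangle^*}$ indeed cannot touch the $k$- and $l$-blocks once incomparability is established, and $W_i(v_i)\geq m_w>0$ gives the required strict inequality.
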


\begin{proof}
It follows from Theorem \ref{prime} that $T_{T(v_i)}^P$ and $I_{T(v_j)}^P$ are principal ideals. By Corollary \ref{equal}, it is sufficient to show that $I_{T(e_{i1})}^P\subseteq I_{T(e_{j1})}^P$. Suppose that there are elements $k$ and $l$ in $P$ such that $I_{T(e_{i1})}^P=\langle k\rangle$ and $I_{T(e_{j1})}^P=\langle l\rangle$. Then
$$T(e_{i1})=u_k+\gamma_k\ \ \text{and}\ \ T(e_{j1})=u_l+\gamma_l$$
where $\gamma_t\in V_{\langle t\rangle^*}$ and $\textbf{0}\neq u_t\in V_t$ such that $W_t(u_t)=w(1)$ for $t=l,k$ respectively.

Note that $|\langle i\rangle|=|\langle k\rangle|$ and $|\langle j\rangle|=|\langle l\rangle|$. If $k=l$, there is nothing to prove. So we assume that $k\neq l$. This means that
$$\text{either}\ \  k\in supp_{\pi}(T(e_{i1})-T(e_{j1}))\ \ \text{or}\ \ l\in supp_{\pi}(T(e_{i1})-T(e_{j1})).$$
\begin{itemize}
  \item {\bf Case 1: } $k\notin supp_{\pi}(T(e_{i1})-T(e_{j1}))$. It follows that $k\in supp_{\pi}(T(e_{j1}))$ and hence $I_{T(e_{i1})}^P=\langle k\rangle\subseteq I_{T(e_{j1})}^P$.

  \item {\bf Case 2: } $l\notin supp_{\pi}(T(e_{i1})-T(e_{j1}))$. Then $l\in supp_{\pi}(T(e_{i1}))$ which implies that $l<k$. Therefore $I_{T(e_{j1})}^P=\langle l\rangle\subsetneq\langle k\rangle=I_{T(e_{i1})}^P$. Hence
\begin{eqnarray*} \overline{\omega}_{w,(P,\pi)}(e_{j1})&=&\overline{\omega}_{w,(P,\pi)}(T(e_{j1}))= \overline{\omega}_{w,(P,\pi)}(u_l+\gamma_l)\\
&=&W_l(u_l)+M_w\cdot|\langle l\rangle^*|=w(1)+M_w\cdot|\langle l\rangle^*|\\
&<&w(1)+M_w\cdot|\langle k\rangle^*|=W_k(u_k)+M_w\cdot|\langle k\rangle^*|\\
&=&\overline{\omega}_{w,(P,\pi)}(u_k)=\overline{\omega}_{w,(P,\pi)}(T(e_{i1}))\\
&=&\overline{\omega}_{w,(P,\pi)}(e_{i1}),
\end{eqnarray*}
which implies that $|\langle j\rangle|<|\langle i\rangle|$, a contradiction to the hypothesis that $i\leq j$.

  \item {\bf Case 3: } $k,l\in supp_{\pi}(T(e_{i1})-T(e_{j1}))$. For $1\leq t\leq s$, there exist $\alpha_t$, $\beta_t$ and $\zeta_t\in \mathbb{F}_q$ such that
$$w(\alpha_t)=W_t(T(e_{i1})),\  w(\beta_t)=W_t(T(e_{j1}))\ \text{and}\  w(\zeta_t)=W_t(T(e_{i1})-T(e_{j1})).$$
If the $l$-th component of $T(e_{i1})$ and the $k$-th component of $T(e_{j1})$ are both non-zeros. Then $l\leq k$ and $k\leq l$, a contradiction to the assumption that $k\neq l$. So, either $(T(e_{i1}))_l=\textbf{0}$ or $(T(e_{j1}))_k=\textbf{0}$ (here $(T(e_{i1}))_l$ denotes the $l$-th component of $T(e_{i1})$ and $(T(e_{j1}))_k$ denotes the $k$-th component of $T(e_{j1})$). By the $(P,\pi,w)$-weight preservation and the linearity of $T$,
\begin{eqnarray*}
\overline{\omega}_{w,(P,\pi)}(\zeta_ke_{k1}-\zeta_le_{l1})&\leq& \overline{\omega}_{w,(P,\pi)}(T(e_{i1})-T(e_{j1}))= \overline{\omega}_{w,(P,\pi)}(e_{i1}-e_{j1})\\
&\leq &\overline{\omega}_{w,(P,\pi)}(e_{j1})=\overline{\omega}_{w,(P,\pi)}(T(e_{j1}))= \overline{\omega}_{w,(P,\pi)}(u_l)\\
&=&w(1)+M_w\cdot|\langle l\rangle^*|.
\end{eqnarray*}
\begin{itemize}
  \item $(T(e_{i1}))_l=\textbf{0}$. Then $w(\zeta_l)=w(\beta_l)=W_l(T(e_{j1}))=W_l(u_l+\gamma_l)=W_l(u_l)=w(1)$ which implies that $k\leq l$ and hence $I_{T(e_{i1})}^P=\langle k\rangle\subseteq \langle l\rangle=I_{T(e_{j1})}^P$.
  \item $(T(e_{j1}))_k=\textbf{0}$. Then $w(\zeta_k)=w(\alpha_k)=w(1)$. If $k\nleq l$, then $$\overline{\omega}_{w,(P,\pi)}(\zeta_ke_{k1}-\zeta_le_{l1})\geq w(\zeta_k)+w(\zeta_l)+M_w\cdot|\langle l\rangle^*|=w(\zeta_l)+w(1)+M_w\cdot|\langle l\rangle^*|>w(1)+M_w\cdot|\langle l\rangle^*|,$$
  a contradiction.
\end{itemize}
\end{itemize}
\end{proof}

\begin{corollary}\label{down}
Given $T\in Gl_{w,(P,\pi)}(V)$ and $i\in[s]$, there is a unique $j\in[s]$ such that $|\langle i\rangle|=|\langle j\rangle|$ and $T\left(V_{\langle i\rangle}\right)\subseteq V_{\langle j\rangle}$.
\end{corollary}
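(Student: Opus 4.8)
The plan is to let $j$ be exactly the element supplied by Corollary \ref{equal} applied to $V_i$, and then to check that this same $j$ works for the whole principal ideal $V_{\langle i\rangle}$. By Corollary \ref{equal} there is a unique $j\in[s]$ with $T(V_i)\subseteq V_{\langle j\rangle}$ and $|\langle i\rangle|=|\langle j\rangle|$, so the size condition is automatic and the only thing left to prove is the inclusion $T(V_{\langle i\rangle})\subseteq V_{\langle j\rangle}$ for this particular $j$.

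First I would record the block decomposition $V_{\langle i\rangle}=\bigoplus_{k\in\langle i\rangle}V_k=\bigoplus_{k\leq i}V_k$, which, together with the linearity of $T$, reduces the desired inclusion to showing $T(V_k)\subseteq V_{\langle j\rangle}$ for every $k\leq i$. Fix such a $k$ and apply Corollary \ref{equal} again to get some $j_k\in[s]$ with $T(V_k)\subseteq V_{\langle j_k\rangle}$; equivalently, for any $\textbf{0}\neq v_k\in V_k$ one has $I_{T(v_k)}^P=\langle j_k\rangle$, a principal ideal by Theorem \ref{prime} (and its Step 3). Likewise $I_{T(v_i)}^P=\langle j\rangle$ for any $\textbf{0}\neq v_i\in V_i$.

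The key step is then a single invocation of Theorem \ref{order}: since $k\leq i$, it gives $I_{T(v_k)}^P\subseteq I_{T(v_i)}^P$, that is $\langle j_k\rangle\subseteq\langle j\rangle$, whence $V_{\langle j_k\rangle}\subseteq V_{\langle j\rangle}$ and therefore $T(V_k)\subseteq V_{\langle j\rangle}$. Ranging over all $k\leq i$ and using the decomposition above yields $T(V_{\langle i\rangle})\subseteq V_{\langle j\rangle}$, as required.

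For uniqueness I would observe that any $j'$ with $|\langle i\rangle|=|\langle j'\rangle|$ and $T(V_{\langle i\rangle})\subseteq V_{\langle j'\rangle}$ in particular satisfies $T(V_i)\subseteq V_{\langle j'\rangle}$, because $V_i\subseteq V_{\langle i\rangle}$; the uniqueness clause of Corollary \ref{equal} then forces $j'=j$. I do not expect a genuine obstacle here, since essentially all of the work is already carried out in Theorems \ref{prime} and \ref{order}; the only point requiring care is the identification of $I_{T(v)}^P$ with the principal ideal $\langle\cdot\rangle$ furnished by those results, so that the monotonicity of Theorem \ref{order} translates cleanly into the containment of the fixed blocks $V_{\langle\cdot\rangle}$.
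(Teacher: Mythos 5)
Your proposal is correct and follows essentially the same route as the paper: both take $j$ from Corollary \ref{equal}, decompose an element of $V_{\langle i\rangle}$ into its blocks $V_k$ with $k\leq i$, and invoke Theorem \ref{order} to place each $I_{T(v_k)}^P$ inside $\langle j\rangle$ before recombining by linearity. Your explicit uniqueness paragraph is a small addition the paper leaves implicit, but it introduces no new ideas.
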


\begin{proof}
Take $v\in V_{\langle i\rangle}$, then $v=v_{i_1}+\cdots+v_{i_t}$ where $i_l\leq i$ for each $l\in [t]$. It follows from Corollary \ref{equal} that there exists $j\in [s]$ such that $T(V_i)\subseteq V_{\langle j\rangle}$ and $I_{T(v_i)}^P=\langle j\rangle$ for $\textbf{0}\neq v_i\in V_i$. By Theorem \ref{order}, we have that $$I_{T(v_{i_l})}^P\subseteq I_{T(e_{i1})}^P=\langle j\rangle.$$
Hence
$$I_{T(v)}^P\subseteq\bigcup\limits_{l=1}^t I_{T(v_{i_l})}^P\subseteq \langle j\rangle.$$
\end{proof}

\begin{lemma}\label{dim}
Given $T\in GL_{w,(P,\pi)}(V)$. Let $i\in P$ and $j$ be the unique element of $P$ determined by $T(V_{\langle i\rangle})\subseteq V_{\langle j\rangle}$. Then $dim (V_i)=dim (V_j)$.
\end{lemma}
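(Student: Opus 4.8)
The plan is to reduce the equality of dimensions to a single counting argument over the whole poset. Recall that $\dim V_i=k_i=\pi(i)$, so the assertion is exactly that $k_i=k_j$, where $j$ is the index attached to $i$ by Corollary \ref{down}. For each $i\in[s]$ I write $\sigma(i)$ for this index, so that $T(V_{\langle i\rangle})\subseteq V_{\langle\sigma(i)\rangle}$, $T(V_i)\subseteq V_{\langle\sigma(i)\rangle}$ and $|\langle i\rangle|=|\langle\sigma(i)\rangle|$. Since the inverse of a $(P,\pi,w)$-isometry is again a $(P,\pi,w)$-isometry, $T^{-1}$ likewise determines a map $\tau:[s]\to[s]$ enjoying the analogous properties.

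First I would establish one inequality by a projection argument. Let $p_{\sigma(i)}:V\to V_{\sigma(i)}$ be the canonical projection onto the block $V_{\sigma(i)}$ and set $\varphi_i=p_{\sigma(i)}\circ T|_{V_i}:V_i\to V_{\sigma(i)}$, a linear map. By Corollary \ref{equal}, for every $\textbf{0}\neq v_i\in V_i$ one may write $T(v_i)=u_{\sigma(i)}+\gamma$ with $\textbf{0}\neq u_{\sigma(i)}\in V_{\sigma(i)}$ and $\gamma\in V_{\langle\sigma(i)\rangle^*}$; since $\varphi_i(v_i)=u_{\sigma(i)}\neq\textbf{0}$, the map $\varphi_i$ has trivial kernel and is injective. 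Hence $\dim V_i\leq\dim V_{\sigma(i)}$, i.e. $k_i\leq k_{\sigma(i)}$ for every $i\in[s]$.

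The crucial step is to prove that $\sigma$ is a permutation of $[s]$, for which I would exploit $T^{-1}$. Applying Corollary \ref{down} to $T$ and then to $T^{-1}$ gives $V_{\langle i\rangle}=T^{-1}\!\big(T(V_{\langle i\rangle})\big)\subseteq T^{-1}\!\big(V_{\langle\sigma(i)\rangle}\big)\subseteq V_{\langle\tau(\sigma(i))\rangle}$, whence $\langle i\rangle\subseteq\langle\tau(\sigma(i))\rangle$. On the other hand the two cardinality conditions furnish $|\langle i\rangle|=|\langle\sigma(i)\rangle|=|\langle\tau(\sigma(i))\rangle|$, so the inclusion of these finite sets is an equality $\langle i\rangle=\langle\tau(\sigma(i))\rangle$; comparing the unique maximal elements of these principal ideals yields $\tau(\sigma(i))=i$. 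The symmetric computation gives $\sigma(\tau(j))=j$, so $\sigma$ and $\tau$ are mutually inverse and $\sigma$ is a bijection of $[s]$.

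Finally I would close by counting. Because $\sigma$ is a permutation of $[s]$ and $\sum_{i=1}^s k_i=n=\sum_{i=1}^s k_{\sigma(i)}$, summing the inequalities $k_i\leq k_{\sigma(i)}$ over $i$ forces each one to be an equality, so $k_i=k_{\sigma(i)}$ and therefore $\dim V_i=\dim V_{\sigma(i)}=\dim V_j$. I expect the bijectivity of $\sigma$ to be the main obstacle: the one-sided bound $k_i\leq k_{\sigma(i)}$ and the final counting are routine, whereas showing that the block-index assignment is invertible genuinely requires that $T^{-1}$ is an isometry together with the matching of principal-ideal cardinalities supplied by Corollary \ref{down}.
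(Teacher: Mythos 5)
Your proof is correct, and while the first inequality $\dim V_i\leq\dim V_j$ is essentially the paper's argument (your projection $p_{\sigma(i)}\circ T|_{V_i}$ is the same map as the paper's composition $V_i\to V_{\langle j\rangle}\to V_{\langle j\rangle}/V_{\langle j\rangle^*}\to V_j$, and injectivity follows from Corollary \ref{equal} in both cases), your treatment of the reverse inequality is genuinely different. The paper works locally: it fixes $i$, takes $0\neq v_i\in V_i$ with $T(v_i)=v_j+\gamma_j$, and runs a weight computation to show that $i\in supp_{\pi}(T^{-1}(v_j))$ and $\langle supp_{\pi}(T^{-1}(v_j))\rangle=\langle i\rangle$, concluding $T^{-1}(V_j)\subseteq V_{\langle i\rangle}$ and then reusing the injection argument for $T^{-1}$ to get $\dim V_j\leq\dim V_i$. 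You instead prove that the block-index assignment $\sigma$ is a permutation of $[s]$ purely from Corollary \ref{down} applied to $T$ and $T^{-1}$ together with the matching of ideal cardinalities (the identification $\tau(\sigma(i))=i$ via the unique maximal element of a principal ideal is sound), and then close with the global count $\sum_i k_i=n=\sum_i k_{\sigma(i)}$. Your route avoids the second weight computation entirely and establishes all the equalities $k_i=k_{\sigma(i)}$ simultaneously, at the cost of being non-local (you cannot conclude anything about a single $i$ without summing over the whole poset); the paper's route is self-contained block by block and, as a by-product, exhibits the inclusion $T^{-1}(V_j)\subseteq V_{\langle i\rangle}$ explicitly, which is the kind of information reused later when $\eta_T$ is shown to be an automorphism. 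Both arguments are valid and rest on previously established results, so there is no circularity in either.
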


\begin{proof}
Consider a sequence of linear maps:
$$V_i\stackrel{T_i}{\longrightarrow} V_{\langle j\rangle}\stackrel{f}{\longrightarrow} V_{\langle j\rangle}/V_{\langle j\rangle^*}\stackrel{g}{\longrightarrow} V_j$$
where $T_i$ is a restriction of $T$ on $V_i$, $f$ is the canonical projection and $g$ is the isomorphism given by
$$u+V_{\langle j\rangle^*}=u_j.$$
Since
$ker(g\circ f\circ T_i)=\{\textbf{0}\}$,
we have that $g\circ f\circ T_i$ is injective and hence $dim(V_i)\leq dim(V_j)$.

We next show that $dim(V_j)\leq dim(V_i)$. It is sufficient to show that there exists $T^{'}\in GL_{w,(P,\pi)}(V)$ such that $T^{'}(V_j)\subseteq V_{\langle i\rangle}$. Let $\textbf{0}\neq v_i\in V_i$. Then $T(v_i)=v_j+\gamma_j$ where $\textbf{0}\neq v_j\in V_j$ such that $W_i(v_i)=W_j(v_j)$ and $\gamma_j\in V_{\langle j\rangle^*}$. By the linearity of $T^{-1}$, we have $T^{-1}(v_j)=v_i-T^{-1}(\gamma_j)$. Note that $i\notin T^{-1}(\gamma_j)$, otherwise
$$\overline{\omega}_{w,(P,\pi)}(\gamma_j)= \overline{\omega}_{w,(P,\pi)}\left(T^{-1}(\gamma_j)\right)\geq m_w+M_w\cdot|\langle i\rangle^*|=m_w+M_w\cdot|\langle j\rangle^*|>M_w\cdot|\langle j\rangle^*|,$$
a contradiction to the fact that $\gamma_j\in V_{\langle j\rangle^*}$. Therefore $i\in supp_{\pi}\left(T^{-1}(v_j)\right)$. As
$$\overline{\omega}_{w,(P,\pi)}\left(T^{-1}(v_j)\right)= \overline{\omega}_{w,(P,\pi)}(v_j)=W_j(v_j)+M_w\cdot|\langle j\rangle^*|=W_i(v_i)+M_w\cdot\left|\langle i\rangle^*\right|,$$
which implies that $\left\langle supp_{\pi}\left(T^{-1}(v_j)\right)\right\rangle=\langle i\rangle$. By Corollary \ref{equal}, we have $T^{-1}(V_j)\subseteq V_{\langle i\rangle}$. This completes our proof.
\end{proof}

\begin{lemma}
Let $T$ be a $(P,\pi,w)$-isometry of $\left(V,d_{w,(P,\pi)}\right)$. Define the map $\eta_T: P\rightarrow P$ as
$$\eta_T(i)=\max I_{T(v_i)}^P,$$
where $\textbf{0}\neq v_i\in V_i$. Then $\eta_T$ is a $(P,\pi)$-automorphism of $P$.
\end{lemma}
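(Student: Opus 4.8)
The plan is to verify the four defining properties in turn: that $\eta_T$ is a well-defined self-map of $P$, that it is order-preserving, that it is a bijection, and that it satisfies the block-size condition $k_{\eta_T(i)}=k_i$.

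First I would check well-definedness. By Theorem \ref{prime}, for any $\textbf{0}\neq v_i\in V_i$ the ideal $I_{T(v_i)}^P$ is principal, say $I_{T(v_i)}^P=\langle j\rangle$; its unique maximal element is $j$, so $\max I_{T(v_i)}^P=j$ is a single element of $P$. Corollary \ref{equal} guarantees that this $j$ is the unique index with $T(V_i)\subseteq V_{\langle j\rangle}$ and in particular is independent of the chosen nonzero $v_i\in V_i$, so $\eta_T(i)=j$ is unambiguous. Order preservation is then immediate from Theorem \ref{order}: if $i\leq j$ in $P$, choosing nonzero vectors $v_i\in V_i$ and $v_j\in V_j$ gives $\langle\eta_T(i)\rangle=I_{T(v_i)}^P\subseteq I_{T(v_j)}^P=\langle\eta_T(j)\rangle$, hence $\eta_T(i)\leq\eta_T(j)$.

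The heart of the argument is bijectivity, which I would establish by exhibiting an inverse. Since $T^{-1}$ is again a $(P,\pi,w)$-isometry (by the earlier remark), the map $\eta_{T^{-1}}$ is defined by the same recipe, and I claim $\eta_{T^{-1}}=\eta_T^{-1}$. Fix $i$ and put $j=\eta_T(i)$. For $\textbf{0}\neq v_i\in V_i$ write $T(v_i)=v_j+\gamma_j$ with $\textbf{0}\neq v_j\in V_j$ and $\gamma_j\in V_{\langle j\rangle^*}$ as in Corollary \ref{equal}; then $T^{-1}(v_j)=v_i-T^{-1}(\gamma_j)$, and exactly the computation carried out in the proof of Lemma \ref{dim} shows $\langle supp_{\pi}(T^{-1}(v_j))\rangle=\langle i\rangle$, i.e. $\eta_{T^{-1}}(j)=i$. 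Thus $\eta_{T^{-1}}\circ\eta_T=\mathrm{id}_P$, and applying the same reasoning to $T^{-1}$ (whose inverse is $T$) gives $\eta_T\circ\eta_{T^{-1}}=\mathrm{id}_P$. Hence $\eta_T$ is a bijection with inverse $\eta_{T^{-1}}$.

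Finally I would upgrade order preservation to an order-isomorphism and check the labeling condition. Since both $\eta_T$ and its inverse $\eta_{T^{-1}}$ are order-preserving (the latter by the argument above applied to $T^{-1}$), we obtain $i\leq j\iff\eta_T(i)\leq\eta_T(j)$, so $\eta_T\in Aut(P)$. For the block sizes, the index $j$ produced by Corollary \ref{down} for $V_{\langle i\rangle}$ coincides with $\eta_T(i)$, so Lemma \ref{dim} yields $\dim V_i=\dim V_{\eta_T(i)}$, that is $k_{\eta_T(i)}=\pi(\eta_T(i))=\pi(i)=k_i$. Therefore $\eta_T$ is a $(P,\pi)$-automorphism. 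The main obstacle is the bijectivity step: well-definedness, order preservation, and the dimension equality all fall out directly from the lemmas already proved, whereas invertibility requires the observation that $\eta_{T^{-1}}$ undoes $\eta_T$, which rests on the fact—extracted from the proof of Lemma \ref{dim}—that $T^{-1}$ carries a nonzero vector of $V_{\eta_T(i)}$ back to a vector whose generated ideal is precisely $\langle i\rangle$.
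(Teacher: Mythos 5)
Your proof is correct, but it reaches bijectivity by a genuinely different route from the paper. Where you construct the inverse map explicitly --- showing via the computation inside the proof of Lemma \ref{dim} that $\eta_{T^{-1}}(\eta_T(i))=i$, and by symmetry $\eta_T\circ\eta_{T^{-1}}=\mathrm{id}_P$ --- the paper instead proves injectivity directly: assuming $\eta_T(i)=\eta_T(j)=l$, it picks $\alpha\in\mathbb{F}_q$ with $w(\alpha)=M_w$, writes $T(\alpha e_{i1})=u_l+\gamma_l$ and $T(\alpha e_{j1})=u_l'+\gamma_l'$ with $W_l(u_l)=W_l(u_l')=M_w$, and computes $\overline{\omega}_{w,(P,\pi)}(\alpha e_{i1}+\alpha e_{j1})\leq \overline{\omega}_{w,(P,\pi)}(\alpha e_{i1})$, which forces $i$ and $j$ to be comparable; since $|\langle i\rangle|=|\langle j\rangle|$ this gives $i=j$, and finiteness of $P$ then upgrades injectivity to bijectivity. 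Your approach buys two things: it avoids the ad hoc weight computation with the maximal-weight scalar $\alpha$, reusing instead an argument already carried out for Lemma \ref{dim}; and it makes the order-isomorphism property transparent, since you verify that both $\eta_T$ and its inverse $\eta_{T^{-1}}$ are order-preserving (the paper leaves implicit the standard fact that an order-preserving bijection of a finite poset is automatically an automorphism). The paper's computation is more self-contained within the lemma itself. Your handling of well-definedness (Corollary \ref{equal}), order preservation (Theorem \ref{order}), and the labeling condition $k_{\eta_T(i)}=k_i$ (Lemma \ref{dim}) matches the paper's.
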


\begin{proof}
By Corollary \ref{equal}, $\eta_T$ is well-defined. Theorem \ref{order} assures that $\eta_T$ is an order preserving map.

It remains to show that $\eta_T$ is one-to-one. Suppose that $\eta_T(i)=\eta_T(j)=l$. Lemma \ref{dim} assures that $k_i=k_l=k_j$. Let $\alpha\in\mathbb{F}_q$ such that $w(\alpha)=M_w$.
By Corollary \ref{equal}, we have
$$\left\{
  \begin{array}{ll}
    T(\alpha e_{i1})=u_l+\gamma_l & u_l\in V_l, \gamma_l\in V_{\langle l\rangle^*}, \\[2mm]
    T(\alpha e_{j1})=u_l^{'}+\gamma_l^{'} & u_l^{'}\in V_l, \gamma_l^{'}\in V_{\langle l\rangle^*},
  \end{array}
\right.$$
and $|\langle i\rangle|=|\langle l\rangle|=|\langle j\rangle|$. Furthermore, $W_l(u_l)=W_l(u_l^{'})=w(\alpha)=M_w$.
Then
\begin{eqnarray*}
\overline{\omega}_{w,(P,\pi)}(\alpha e_{i1}+\alpha e_{j1})&=&\overline{\omega}_{w,(P,\pi)}\left(T(\alpha e_{i1})+T(\alpha e_{j1})\right)\\
&=&\overline{\omega}_{w,(P,\pi)}(u_l+u_l^{'})=W_l(u_l+u_l^{'})+M_w\cdot|\langle l\rangle^*|\\
&\leq& M_w+M_w\cdot|\langle l\rangle^*|=W_l(u_l)+M_w\cdot|\langle l\rangle^*|\\
&=&\overline{\omega}_{w,(P,\pi)}(u_l)=\overline{\omega}_{w,(P,\pi)}(T(\alpha e_{i1}))\\
&=&\overline{\omega}_{w,(P,\pi)}(\alpha e_{i1})
\end{eqnarray*}
Therefore $j\leq i$. Similarly, one can prove that $i\leq j$ and hence $i=j$. Therefore $\eta_T\in Aut(P,\pi)$.
\end{proof}

Recall that $\mathcal {T}$  is the set of mappings defined in Proposition \ref{T}. In the following, we will show that $\mathcal {T}$ is a normal subgroup of $GL_{w,(P,\pi)}(V)$.

\begin{lemma}
Consider the map $\Lambda: GL_{w,(P,\pi)}(V)\rightarrow Aut(P,\pi)$ given by
$$\Lambda(T)=\eta_T.$$
Then $\Lambda$ is a surjective group homomorphism and $ker(\Lambda)=\mathcal {T}$.
\end{lemma}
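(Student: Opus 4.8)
The plan is to establish the three claims in sequence: that $\Lambda$ is multiplicative, that it admits a right inverse (forcing surjectivity), and that its kernel is exactly $\mathcal{T}$. Throughout I would use Corollary \ref{equal} to identify $\eta_T(i)$ concretely: for $\textbf{0}\neq v_i\in V_i$ the ideal $I_{T(v_i)}^P=\langle j\rangle$ is principal (Theorem \ref{prime}), its unique maximal element is $j$, and $j$ is the unique index with $T(V_i)\subseteq V_{\langle j\rangle}$, so $\eta_T(i)=j$. This reformulation is what makes all three verifications tractable.

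For the homomorphism property I would show $\eta_{TS}=\eta_T\circ\eta_S$ for $T,S\in GL_{w,(P,\pi)}(V)$. Fix $i$ and $\textbf{0}\neq v_i\in V_i$, and set $j=\eta_S(i)$ and $k=\eta_T(j)$. By Corollary \ref{equal} applied to $S$, write $S(v_i)=w_j+\delta_j$ with $\textbf{0}\neq w_j\in V_j$ and $\delta_j\in V_{\langle j\rangle^*}$; applying $T$ and using linearity gives $TS(v_i)=T(w_j)+T(\delta_j)$. Corollary \ref{equal} applied to $T$ yields $T(w_j)=u_k+\gamma_k$ with $\textbf{0}\neq u_k\in V_k$ and $\gamma_k\in V_{\langle k\rangle^*}$. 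For the tail, I would decompose $\delta_j$ into homogeneous pieces supported on indices $l<j$; since $\eta_T$ is an order-automorphism (the previous lemma), Theorem \ref{order} together with injectivity of $\eta_T$ forces $\eta_T(l)<\eta_T(j)=k$, so $T(\delta_j)\in V_{\langle k\rangle^*}$. Hence the $V_k$-component of $TS(v_i)$ is exactly $u_k\neq\textbf{0}$ while every other component lives in $\langle k\rangle^*$, giving $I_{TS(v_i)}^P=\langle k\rangle$ and $\eta_{TS}(i)=k=\eta_T(\eta_S(i))$. Matching the composition convention of Theorem \ref{section}, this reads $\Lambda(TS)=\Lambda(T)\Lambda(S)$.

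Surjectivity and the kernel are then quick. For surjectivity I would use the section $\Psi$ of Theorem \ref{section}: given $\varphi\in Aut(P,\pi)$, the map $T_\varphi$ carries $V_i$ isomorphically onto $V_{\varphi(i)}$, so $I_{T_\varphi(v_i)}^P=\langle\varphi(i)\rangle$ and $\eta_{T_\varphi}(i)=\varphi(i)$, i.e. $\Lambda\circ\Psi=\mathrm{id}$. For the kernel, $T\in\ker(\Lambda)$ means $\eta_T(i)=i$, that is $I_{T(v_i)}^P=\langle i\rangle$ for all $\textbf{0}\neq v_i\in V_i$; by Corollary \ref{equal} with $j=i$ this is precisely $T(v_i)=u_i+\gamma_i$ with $u_i\in V_i$, $\gamma_i\in V_{\langle i\rangle^*}$ and $W_i(v_i)=W_i(u_i)$, which is the defining condition of $\mathcal{T}$ in Proposition \ref{T} (and $T$ is a linear isomorphism by the earlier remark on isometries). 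Conversely, any $T\in\mathcal{T}$ satisfies these relations, and $v_i\neq\textbf{0}$ forces $W_i(u_i)=W_i(v_i)>0$, hence $u_i\neq\textbf{0}$ and $I_{T(v_i)}^P=\langle i\rangle$, so $\eta_T=\mathrm{id}$. Thus $\ker(\Lambda)=\mathcal{T}$.

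The only delicate point, and the step I expect to be the main obstacle, is the multiplicativity: I must rule out that applying $T$ to the lower-order tail $\delta_j$ produces a component at level $k$ or higher, which would corrupt the leading term $u_k$. This is exactly where the strict order-preservation $l<j\Rightarrow\eta_T(l)<\eta_T(j)$ (monotonicity from Theorem \ref{order} plus injectivity of $\eta_T$) is indispensable, guaranteeing $T(\delta_j)\in V_{\langle k\rangle^*}$. Once this is secured, the surjectivity and kernel computations follow routinely from Corollary \ref{equal} and Proposition \ref{T}.
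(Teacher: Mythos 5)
Your proposal is correct and follows the same overall architecture as the paper's proof: establish $\eta_{TS}=\eta_T\circ\eta_S$ by tracking the decomposition $S(v_i)=w_j+\delta_j$ through $T$, get surjectivity from the section $\Psi$ of Theorem \ref{section} via $\Lambda\circ\Psi=\mathrm{id}$, and identify $\ker(\Lambda)$ with $\mathcal{T}$ via Corollary \ref{equal} and Proposition \ref{T}. The one place where you argue genuinely differently is the crux you correctly single out, namely showing the tail $T(\delta_j)$ lands in $V_{\langle k\rangle^*}$ and cannot corrupt the leading component $u_k$. The paper does this by a weight computation: weight preservation gives $\overline{\omega}_{w,(P,\pi)}(T(\delta_j))=\overline{\omega}_{w,(P,\pi)}(\delta_j)\leq M_w\cdot|\langle j\rangle^*|<m_w+M_w\cdot|\langle k\rangle^*|$, which (given that the image already sits in $V_{\langle k\rangle}$) rules out a nonzero $V_k$-component. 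You instead decompose $\delta_j$ blockwise and invoke the previous lemma: $\eta_T$ is injective and order-preserving, so $l<j$ forces $\eta_T(l)<k$ and hence $T(V_l)\subseteq V_{\langle\eta_T(l)\rangle}\subseteq V_{\langle k\rangle^*}$. Both are valid; your order-theoretic version is arguably cleaner in that it avoids the implicit appeal to $T(\delta_j)\in V_{\langle k\rangle}$ that the paper's weight inequality needs to be conclusive, at the cost of leaning on the full strength of the preceding lemma ($\eta_T\in Aut(P,\pi)$) rather than just weight preservation. Your kernel argument also gets $W_i(v_i)=W_i(u_i)$ for free from Corollary \ref{equal}, where the paper rederives it; this is a harmless shortcut.
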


\begin{proof}
Let $T$, $T^{'}\in GL_{w,(P,\pi)}(V)$, $i\in [s]$ and $\textbf{0}\neq v_i\in V_i$. Suppose that $\eta_T(i)=j$ and $\eta_{T^{'}}(j)=k$. Take $v_i\in V_i$ such that $W_i(v_i)=m_w$. Then $T(v_i)=v_j+\gamma_j$ where $\gamma_j\in V_{\langle j\rangle^*}$ and $v_j\in V_j$ such that $W_j(v_j)=m_w$. Meanwhile, $T^{'}(v_j)=v_k+\gamma_k$ where $\gamma_k\in V_{\langle k\rangle^*}$ and $v_k\in V_k$ satisfies $W_k(v_k)=W_j(v_j)=m_w$. We can see that
$$T^{'}T(v_i)=T^{'}(v_j+\gamma_j)=v_k+\gamma_k+T^{'}(\gamma_j).$$
Since $T$ and $T^{'}$ preserve $(P,\pi)$-weight, we have
\begin{eqnarray*}
\overline{\omega}_{w,(P,\pi)}\left(T^{'}(\gamma_j)\right)&= &\overline{\omega}_{w,(P,\pi)}(\gamma_j)<\overline{\omega}_{w,(P,\pi)}(v_j)= \overline{\omega}_{w,(P,\pi)}\left(T^{'}(v_j)\right)\\
&=&\overline{\omega}_{w,(P,\pi)}(v_k+\gamma_k)=\overline{\omega}_{w,(P,\pi)}(v_k)\\
&=&m_w+M_w\cdot|\langle k\rangle^*|,
\end{eqnarray*}
which implies that $T^{'}(\gamma_j)\in V_{\langle k\rangle^*}$. Therefore $T^{'}T(v_i)=v_k+u_k$ where $u_k\in V_{\langle k\rangle^*}$. It follows that  $\eta_{T^{'}}\eta_T(i)=k=\eta_{T^{'}T}(i)$.

Consider the map $\Psi: Aut(P,\pi)\rightarrow GL_{w,(P,\pi)}(V)$ defined in Theorem \ref{section}, we can see that $\Lambda\circ\Psi= Id_{P}$ which implies that $\Lambda$ is surjective.

It remains to show that $ker(\Lambda)=\mathcal {T}$. It is clear that $\mathcal {T}\subseteq ker(\Lambda)$. On the other hand, if $T\in ker(\Lambda)$, then $T(V_i)\subseteq V_{\langle i\rangle}$ for every $i\in[s]$. Let $\textbf{0}\neq v_i\in V_i$. Suppose that $T(v_i)=u_i+\gamma_i$ where $u_i\in V_i$ and $\gamma_i\in V_{\langle i\rangle^*}$, then $$\overline{\omega}_{w,(P,\pi)}(v_i)=\overline{\omega}_{w,(P,\pi)}(T(v_i))= \overline{\omega}_{w,(P,\pi)}(u_i+\gamma_i)>M_w\cdot|\langle i\rangle^*|$$
 which implies that $u_i\neq \textbf{0}$. Furthermore, we have $W_i(v_i)=W_i(u_i)$.
\end{proof}

Denote by $\mathcal {A}=\Psi(Aut(P,\pi))$. It follows from Theorem \ref{section} that $\mathcal {A}\cong Aut(P,\pi)$.

Let $M_{n\times n}(\mathbb{F}_q)$ denotes the set of all $n\times n$ matrix over $\mathbb{F}_q$ and
$$U_{w,(P,\pi)}=\left\{\left(A_{ij}\right)\in M_{n\times n}(\mathbb{F}_q):
\left.
\begin{array}{l}
A_{ij}\in M_{k_i\times k_j}(\mathbb{F}_q);
A_{ij}=O\ \text{if}\ i\nleq j;\\[2mm]
W_r(v_{rl})=w(1)\ \text{where}\ v_{rl}\ \text{is the $l$-th column in}\ A_{rr},\ \text{and}\\[2mm]
W_r(\beta_1v_{r1}+\cdots+\beta_{kr}v_{rk_r})=W_r(\beta)\ \text{where}\ \beta=(\beta_1,\ldots,\beta_{kr})\in V_r
\end{array}
\right.\right\}.$$

\begin{theorem}\label{semi}
With the notations above, we have
$$GL_{w,(P,\pi)}(V)\cong \mathcal {T}\rtimes\mathcal {A}\cong U_{w,(P,\pi)}\rtimes Aut(P,\pi).$$
\end{theorem}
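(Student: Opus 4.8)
The plan is to read both isomorphisms off a single short exact sequence that is already essentially assembled in the preceding lemmas, and then to translate the internal semidirect product into matrix language. First I would record the short exact sequence
$$1 \longrightarrow \mathcal{T} \longrightarrow GL_{w,(P,\pi)}(V) \stackrel{\Lambda}{\longrightarrow} Aut(P,\pi) \longrightarrow 1.$$
Everything needed is in hand: the lemma on $\Lambda$ shows it is a surjective homomorphism with $\ker(\Lambda)=\mathcal{T}$, so $\mathcal{T}$ is normal, and the map $\Psi$ of Theorem \ref{section} satisfies $\Lambda\circ\Psi=Id$, hence is a section that splits the sequence.

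Then I would invoke the splitting lemma with $\mathcal{A}=\Psi(Aut(P,\pi))$. The two conditions for an internal semidirect product are checked directly: $\mathcal{T}\cap\mathcal{A}=\{Id\}$, since any $T=\Psi(\varphi)\in\mathcal{T}=\ker\Lambda$ forces $\varphi=\Lambda\Psi(\varphi)=Id$ and hence $T=Id$; and $GL_{w,(P,\pi)}(V)=\mathcal{T}\cdot\mathcal{A}$, since for arbitrary $T$ the element $\Psi(\Lambda(T))^{-1}T$ lies in $\ker\Lambda=\mathcal{T}$, giving the factorization $T=\Psi(\Lambda(T))\cdot\big(\Psi(\Lambda(T))^{-1}T\big)$. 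Normality of $\mathcal{T}$ then yields $GL_{w,(P,\pi)}(V)\cong\mathcal{T}\rtimes\mathcal{A}$, with $\mathcal{A}$ acting on $\mathcal{T}$ by conjugation.

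For the second isomorphism I would identify $\mathcal{T}$ with $U_{w,(P,\pi)}$ via the matrix representation $T\mapsto[T]_B$ relative to the ordered basis $B$. The Corollary following Proposition \ref{T} shows that for $T\in\mathcal{T}$ the matrix $[T]_B$ is block upper triangular with $A_{ij}=O$ whenever $i\nleq j$ and with diagonal blocks obeying the two weight constraints defining $U_{w,(P,\pi)}$; conversely, those diagonal constraints are exactly the condition $W_i(u_i)=W_i(v_i)$ while the strictly upper blocks produce the $\gamma_i\in V_{\langle i\rangle^*}$ part, so every matrix in $U_{w,(P,\pi)}$ defines a map of the shape in Proposition \ref{T} and thus lies in $\mathcal{T}$. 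Since $T\mapsto[T]_B$ is a bijective group homomorphism (composition goes to matrix product), this gives $\mathcal{T}\cong U_{w,(P,\pi)}$, while $\mathcal{A}\cong Aut(P,\pi)$ by Theorem \ref{section}.

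Finally I would verify that this pair of isomorphisms intertwines the two actions. The conjugation action of $\Psi(\varphi)=T_\varphi$ on $\mathcal{T}$ becomes, in matrix terms, conjugation of $[T]_B$ by the block-permutation matrix $[T_\varphi]_B$, which sends the block $A_{ij}$ to $A_{\varphi^{-1}(i)\varphi^{-1}(j)}$; this is well defined because a $(P,\pi)$-automorphism preserves block sizes $k_i=k_{\varphi(i)}$ and the order, hence preserves the upper-triangular pattern and both diagonal weight conditions. This coincides with the prescribed action of $Aut(P,\pi)$ on $U_{w,(P,\pi)}$, so the two isomorphisms combine into $\mathcal{T}\rtimes\mathcal{A}\cong U_{w,(P,\pi)}\rtimes Aut(P,\pi)$. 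I expect the main obstacle to be precisely the bookkeeping of this last step: confirming that conjugation by $[T_\varphi]_B$ maps $U_{w,(P,\pi)}$ back into itself and reproduces the intended block-permutation action, rather than any of the group-theoretic splitting, which is routine once the sequence is in place.
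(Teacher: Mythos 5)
Your proposal is correct and follows essentially the same route as the paper: the short exact sequence split by $\Psi$ is exactly the paper's decomposition $S=(S\circ T_{\eta^{-1}})\circ T_{\eta}$ together with the check that $\mathcal{T}\cap\mathcal{A}=\{Id\}$, with $\mathcal{T}\cong U_{w,(P,\pi)}$ read off from the matrix corollary. You are somewhat more explicit than the paper about the identification $\mathcal{T}\cong U_{w,(P,\pi)}$ and the compatibility of the conjugation action with the block-permutation action, which the paper leaves implicit, but this is elaboration rather than a different argument.
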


\begin{proof}
Let $T$ be a $(P,\pi,w)$-isometry of $\left(V,d_{w,(P,\pi)}\right)$. Let $\eta=\eta_T$,
then $S\circ (T_{\eta})^{-1}=S\circ T_{\eta^{-1}}\in\mathcal {T}$ and
$$S=(S\circ (T_{\eta^{-1}})\circ T_{\eta}.$$

It remains to show that $\mathcal {T}\cap\mathcal {A}=\{Id\}$ where $Id$ is the identity mapping of $(V,d_{w,(P,\pi)})$. Suppose that $T\in \mathcal {T}\cap\mathcal {A}$. Since $T\in\mathcal {T}=ker(\Lambda)$, we have $\Lambda(T)=Id_{P}$. On the other hand $T=\Psi(\varphi)$ for some $\varphi\in Aut(P,\pi)$ and $\Lambda\Psi(\varphi)=\varphi=Id_P$. Hence $T=\Psi(Id_P)=Id$.
\end{proof}

\section{Examples}

\quad\; Here we will illustrate our conclusion with some examples.

\begin{example}
When the label $\pi$ satisfies $\pi(i)=1$ for all $i\in [s]$, the weighted poset block metric induces weighted poset metric, $d_{(P,w)}$, introduced in [\ref{panek}]. Immediate substitution gives that
$$U_{w,(P,\pi)}=\left\{(a_{ij})\in M_{s\times s}(\mathbb{F}_q):
\left.
\begin{array}{l}
a_{ij}=0\ \text{if}\ i\nleq j\ \text{and}\\[2mm]
w(a_{ii})=w(1)\ \text{such that}\ w(\alpha a_{ii})=w(\alpha)\ \text{where}\ \alpha\in\mathbb{F}_q
\end{array}
\right.\right\}.$$
Then the characterization of $\left(\mathbb{F}_q^s,d_{w,(P,\pi)}\right)$ given in [\ref{panek}, Theorem 19]  follows from Theorem \ref{semi} as a particular case:
$$GL_{w,(P,\pi)}(\mathbb{F}_q^s)\cong U_{w,(P,\pi)}\rtimes Aut(P).$$
\end{example}

\begin{example}
When the weight $w$ on $\mathbb{F}_q$ is the Hamming weight $w_H$, the weighted poset block metric induces poset block metric introduced in [\ref{ALVES}]. Immediate substitution gives that
$$U_{w_H,(P,\pi)}=\left\{(A_{ij})\in M_{n\times n}(\mathbb{F}_q):
\left.
\begin{array}{l}
A_{ij}\in M_{k_i\times k_j}(\mathbb{F}_q);\\[2mm]
A_{ij}=O\ \text{if}\ i\nleq j;\\[2mm]
A_{ii}\ \text{is invertible}
\end{array}
\right.\right\}.$$
Then, the characterization of the group of linear isometries of $\left(\mathbb{F}_q^n,d_{w,(P,\pi)}\right)$ follows from Theorem \ref{semi} as a particular case:
$$GL_{w_H,(P,\pi)}(V)\cong U_{w_H,(P,\pi)}\rtimes Aut(P,\pi).$$
\end{example}

We remark that the results in Section 3 are also valid when we consider the Lee weight $w_L$ over $\mathbb{Z}_m$ instead of $\mathbb{F}_q$.

\begin{example}
When the weight $w$ on $\mathbb{Z}_m$ is the Lee weight $w_{L}$, the weighted poset block metric induced pomset block metric. Immediate substitution gives that
$$U_{w_L,(P,\pi)}=\left\{(A_{ij})\in M_{n\times n}(\mathbb{Z}_m)£º
\left.
\begin{array}{l}
A_{ij}\in M_{k_i\times k_j}(\mathbb{Z}_m);
A_{ij}=O\ \text{if}\ i\nleq j;\\[2mm]
W_r(v_{rl})=\pm 1\ \text{where}\ v_{rl}\ \text{is the $l$-th column in}\ A_{rr},\ \text{and}\\[2mm]
W_r(\beta_1v_{r1}+\cdots+\beta_{kr}v_{rk_r})=W_r(\beta)\ \text{where}\ \beta=(\beta_1,\ldots,\beta_{kr})\in V_r
\end{array}
\right.\right\}.$$
Then, the characterization of the group of linear isometries of $\left(\mathbb{Z}_m^n,d_{w_L,(P,\pi)}\right)$ follows from Theorem \ref{semi} as a particular case:
$$GL_{w_L,(P,\pi)}(V)\cong U_{w_L,(P,\pi)}\rtimes Aut(P,\pi).$$
\end{example}

\begin{example}
When $P$ is an anti-chain and the label $\pi$ satisfies
\begin{center}
$k_1=\ldots k_{t_1}=m_1$\\
$k_{t_1+1}=\ldots=k_{t_1+t_2}=m_2$\\
$\vdots$\\
$k_{t_1+t_2+\cdots+t_{l-1}+1}=\ldots=k_n=m_{l},$
\end{center}
we have
$$Aut(P,\pi)\cong S_{t_1}\times S_{t_2}\times \cdots \times S_{t_l}.$$
In this case $\overline{\omega}_{w,(P,\pi)}(v)=\sum\limits_{i\in supp_{\pi}(v)}W_i(v)$. Denote by $GL(k_i,w)$ the group of the linear transformation $T:\mathbb{F}_q^{k_i}\rightarrow \mathbb{F}_q^{k_i}$ such that $\overline{\omega}_{w,(P,\pi)}(T(v_i))=\overline{\omega}_{w,(P,\pi)}(v_i)$. Then we have
$$\mathcal {T}\cong GL(k_1,w)\times GL(k_2,w)\times\cdots\times GL(k_n,w).$$
It follows from Theorem \ref{semi} that
$$GL_{w,(P,\pi)}(V)\cong \left(\prod\limits_{i=1}^n GL(k_i,w)\right)\rtimes \left(\prod\limits_{i=1}^l S_{t_i}\right).$$
Note that when $\pi(i)=1$, $Aut(P,\pi)\cong  S_n$ and $GL(k_i,w)$ is the group of linear transformation $T: \mathbb{F}_q\rightarrow\mathbb{F}_q$ that preserves the weight $w$. Then we have
$$GL_{w,(P,\pi)}(V)\cong \left(\prod\limits_{i=1}^n GL(1,w)\right)\rtimes S_n.$$
\end{example}

\section{MDS codes and perfect codes in weighted poset block metric}

\subsection{Basic definitions}
\quad\; In this section, the notation of $P$, $\pi$, $w$ and $V$ is the same as Section 3.  A subset $C\subseteq V$ with cardinality $K$ is said to be an \emph{$(n,K,d_w(C))$ $(P,\pi,w)$-code}, where $V$ is endowed with the weighted poset block metric $d_{w,(P,\pi)}(.,.)$ and
$$d_w(C)=\min\{d_{w,(P,\pi)}(u,v): u\neq v\in C\}$$
is the minimal $(P,\pi,w)$-distance of $C$.
When the weight $w$ over $\mathbb{F}_q$ is considered to be Hamming weight $w_H$, we denote by $d_H(C)=d_{w_H}(C)$. A \emph{linear $(P,\pi,w)$-code} is a subspace of $V$.

Denote by $\mathcal {I}(P)$ the set of all ideals of $P$. Now we state the Singleton bound for the case of weighted poset block metric over $\mathbb{F}_q^n$.

\begin{theorem}\label{SIN}(Singleton Bound)
Let $C$ be an $(n,K,d_w(C))$ $(P,\pi,w)$-code. Let $\lambda=\left\lfloor\frac{d_w(C)-m_w}{M_w}\right\rfloor$ and $\mu=\max\limits_{I\in \mathcal {I}(P),|I|=\lambda}\sum\limits_{i\in I}k_i$. Then
$$K\leq q^{n-\mu}.$$
\end{theorem}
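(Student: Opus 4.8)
The plan is to prove this by a coordinate-deletion (puncturing) argument, in the spirit of the classical Singleton bound. First I would fix an ideal $I \in \mathcal{I}(P)$ with $|I| = \lambda$ realizing the maximum in the definition of $\mu$, so that $\sum_{i \in I} k_i = \mu$. Such an ideal of size exactly $\lambda$ always exists: choosing any linear extension $x_1, x_2, \ldots, x_s$ of $P$ (i.e. an ordering with $x_r < x_t$ in $P$ only if $r < t$), the initial segment $\{x_1, \ldots, x_\lambda\}$ is automatically an ideal. Moreover $\lambda \le s$ is guaranteed, since the largest possible $(P,\pi,w)$-weight of any vector is $sM_w$, whence $d_w(C) \le sM_w$ forces $\lambda = \lfloor (d_w(C)-m_w)/M_w\rfloor \le s-1$. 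With $I$ fixed, I would consider the projection $\rho: V \to \bigoplus_{i \notin I} V_i$ that deletes the blocks indexed by $I$, and aim to show that its restriction $\rho|_C$ is injective.

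The heart of the argument is a single weight estimate. Suppose $u \neq v$ in $C$ satisfy $\rho(u) = \rho(v)$; then $u-v$ is supported entirely on the blocks of $I$, that is, $supp_{\pi}(u-v) \subseteq I$. Since $I$ is an ideal, $I_{u-v}^P = \langle supp_{\pi}(u-v)\rangle \subseteq I$, so $|I_{u-v}^P| \le |I| = \lambda$. Because every local weight satisfies $W_i(u-v) \le M_w$, while each term in the non-maximal sum equals exactly $M_w$, I obtain
$$\overline{\omega}_{w,(P,\pi)}(u-v) = \sum_{i \in M_{u-v}^P} W_i(u-v) + \sum_{i \in I_{u-v}^P\setminus M_{u-v}^P} M_w \le |I_{u-v}^P|\cdot M_w \le \lambda M_w.$$
From the definition of $\lambda$ one has $\lambda M_w \le d_w(C) - m_w$, and since $m_w > 0$ (being the weight of a nonzero element), this gives $\overline{\omega}_{w,(P,\pi)}(u-v) < d_w(C)$. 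Thus $d_{w,(P,\pi)}(u,v) < d_w(C)$ for distinct codewords $u,v$, contradicting the minimality of $d_w(C)$. Hence $u = v$, so $\rho|_C$ is injective.

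The conclusion then follows by counting: $K = |C| \le \bigl|\bigoplus_{i \notin I} V_i\bigr| = q^{\sum_{i \notin I} k_i} = q^{n-\mu}$. I expect the genuine content to be concentrated in the displayed estimate $\overline{\omega}_{w,(P,\pi)}(u-v) \le |I_{u-v}^P|\,M_w$, which is the linchpin and the step I would check most carefully; the remaining points are bookkeeping — confirming existence of an ideal of size exactly $\lambda$ (the linear-extension remark) and verifying that the final inequality is strict, where the positivity $m_w > 0$ is essential.
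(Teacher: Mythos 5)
Your proposal is correct and follows essentially the same route as the paper: both arguments show that two codewords agreeing outside an ideal $I$ of size $\lambda$ would be at $(P,\pi,w)$-distance at most $\lambda M_w \le d_w(C)-m_w < d_w(C)$, a contradiction, and then count via the injectivity of the projection onto the coordinates outside $I$. Your additional remarks (existence of an ideal of size exactly $\lambda$ via a linear extension, and the role of $m_w>0$ in making the inequality strict) are correct details that the paper leaves implicit.
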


\begin{proof}
Let $I\in \mathcal {I}(P)$ with $|I|=\lambda$. Let $u$ and $v$ be two distinct elements of $C$. If $u$ and $v$ coincide in all positions out of $I$, then
$$d_{w,(P,\pi)}(u,v)=\overline{\omega}_{w,(P,\pi)}(u-v)\leq|I|\cdot M_w=\left\lfloor\frac{d_w(C)-m_w}{M_w}\right\rfloor\cdot M_w<d_w(C),$$
a contradiction. This means that any two distinct codewords of $C$ will differ in at least one position outside $I$. Therefore there exists an injective map from $C$ to $\mathbb{F}_q^{n-\sum\limits_{i\in I}k_i}$ which implies that $K\leq q^{n-\sum\limits_{i\in I}k_i}$. Hence $K\leq q^{n-\sum\limits_{i\in I}k_i}$ for any $I\in\mathcal {I}(P)$ with $|I|=\lambda$.
\end{proof}

\begin{remark}
Let $P$ be a chain with usual order $1<2<\cdots<s$ and $\pi(i)=1$ for all $i\in[s]$. For a linear $(n,q^k,d_w(C))$ $(P,\pi,w)$-code $C$, Theorem \ref{SIN} deduces
$$d_w(C)\leq M_w(n-k)+m_w,$$
which is the same as the Singleton bound for the case of weighted poset metric when $C$ is a linear code and $P$ is a chain [\ref{panek}, Corollary 24].
\end{remark}

\begin{remark}
Let $C$ be an $(n,K,d_w(C))$ $(P,\pi,w)$-code with $K=q^k$. When the weight $w$ is taken to be Hamming weight over $\mathbb{F}_q$, Theorem \ref{SIN} deduces
$$n-k \geq \max\limits_{I\in\mathcal {I}(P),|I|=d_H(C)-1}\sum\limits_{i\in I} k_i,$$
which is the same as the Singleton bound for the case of poset block metric when $C$ is a linear code [\ref{DASS}, Theorem 3.2]. Note that our results holds for non-linear codes as well.
\end{remark}

\begin{remark}
Let $P$ be a chain. Without loss of generality, we may assume that $P$ has the chain order $1<2<\cdots<s$. For an $(n,K,d_w(C))$ $(P,\pi,w)$-code $C$. Theorem \ref{SIN} deduces
$$K\leq q^{n- (k_1+k_2+\cdots+k_{\lambda})}.$$
\end{remark}

We now define a maximum distance separable $(P,\pi,w)$-code.

\begin{Definition}
An $(n,K,d_w(C))$ $(P,\pi,w)$-code $C\subseteq V$ is called a maximum distance separable (MDS) $(P,\pi,w)$-code if it attains the Singleton bound.
\end{Definition}

Let $P=\left([s],\leq_P\right)$ and $Q=\left([s],\leq_Q\right)$ be two posets. We say that $Q$ is \emph{finer} than $P$ if $i\leq_P j$ in $P$ implies that $i\leq_Q j$ in $Q$.

\begin{lemma}\label{finer}
Let $P=\left([s],\leq_P\right)$ and $Q=\left([s],\leq_Q\right)$ be two posets such that $Q$ is finer than $P$. Let $\pi$ be a labeling such that $\pi(1)=\pi(2)=\cdots=\pi(s)=t$. If $C$ is an MDS $(P,\pi,w)$-code, then $C$ is an MDS $(Q,\pi,w)$-code.
\end{lemma}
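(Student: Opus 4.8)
The plan is to reduce the MDS property to a single equality between the two Singleton-bound exponents, and then to control how the $(P,\pi,w)$-weight changes when we pass to the finer order $Q$. Since $\pi(1)=\cdots=\pi(s)=t$, every ideal $I$ satisfies $\sum_{i\in I}k_i=t|I|$, and because a finite poset admits ideals of every cardinality from $0$ up to $s$, the quantity $\mu$ in Theorem \ref{SIN} equals $t\lambda$ for \emph{both} $P$ and $Q$. Writing $d_P$ and $d_Q$ for the minimum distance of $C$ computed in the two orders, and setting $\lambda_P=\lfloor(d_P-m_w)/M_w\rfloor$, $\lambda_Q=\lfloor(d_Q-m_w)/M_w\rfloor$, the MDS hypothesis for $P$ reads $K=q^{n-t\lambda_P}$ while the goal is $K=q^{n-t\lambda_Q}$. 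Thus it suffices to prove $\lambda_P=\lambda_Q$.

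The heart of the argument is the pointwise inequality $\overline{\omega}_{w,(Q,\pi)}(u)\ge\overline{\omega}_{w,(P,\pi)}(u)$ for every $u\in V$. I would first rewrite the weight in the form
$$\overline{\omega}_{w,(P,\pi)}(u)=M_w\,|I_u^P|-\sum_{i\in M_u^P}\bigl(M_w-W_i(u)\bigr),$$
and analogously for $Q$. Because $Q$ is finer than $P$, the ideal generated by $supp_{\pi}(u)$ can only grow, giving $I_u^P\subseteq I_u^Q$ and hence $|I_u^P|\le|I_u^Q|$; on the other hand, adding comparabilities can only remove maximal elements of $supp_{\pi}(u)$, so $M_u^Q\subseteq M_u^P$. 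Since each summand $M_w-W_i(u)$ is nonnegative, the correction term for $Q$ is at most that for $P$, and combining the two facts yields the claimed inequality.

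From the pointwise inequality it follows immediately that $d_Q\ge d_P$, and monotonicity of the floor gives $\lambda_Q\ge\lambda_P$. Conversely, the Singleton bound of Theorem \ref{SIN} applied to $Q$ reads $K\le q^{n-t\lambda_Q}$, while the MDS hypothesis for $P$ gives $K=q^{n-t\lambda_P}$; comparing exponents forces $n-t\lambda_P\le n-t\lambda_Q$, i.e. $\lambda_P\ge\lambda_Q$. Hence $\lambda_P=\lambda_Q$, so $K=q^{n-t\lambda_Q}$ and $C$ attains the Singleton bound for $Q$; that is, $C$ is an MDS $(Q,\pi,w)$-code.

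I expect the main obstacle to be the pointwise weight comparison, specifically verifying the two order-theoretic inclusions $I_u^P\subseteq I_u^Q$ and $M_u^Q\subseteq M_u^P$ and confirming that the loss from having fewer maximal contributors is always dominated by the gain from the larger ideal. The uniform-label hypothesis $\pi(i)=t$ is exactly what makes the accounting close, since it removes any dependence of $\mu$ on \emph{which} elements populate the ideal, leaving only the inequality $\lambda_Q\ge\lambda_P$ to be matched against the Singleton bound.
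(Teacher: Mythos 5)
Your proposal is correct and follows essentially the same route as the paper: establish the pointwise comparison $\overline{\omega}_{w,(P,\pi)}\leq\overline{\omega}_{w,(Q,\pi)}$ (which the paper merely asserts as a consequence of $Q$ being finer than $P$, and which you justify carefully via $I_u^P\subseteq I_u^Q$ and $M_u^Q\subseteq M_u^P$), deduce $\lambda_P\leq\lambda_Q$ and hence $\mu_1=t\lambda_P\leq t\lambda_Q=\mu_2$, and then squeeze $K=q^{n-\mu_1}\geq q^{n-\mu_2}\geq K$ using the Singleton bound for $Q$. The only difference is one of presentation: your explicit verification of the weight inequality fills in a step the paper leaves implicit.
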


\begin{proof}
 Let $C$ be an MDS $(P,\pi,w)$-code with diameters $(n,K,d_w(C))$. Then $C$ is an $(n,K,d_w'(C))$ $(Q,\pi,w)$-code (here $d_w'(C)$ is the minimal $(Q,\pi,w)$-distance of $C$). Since $Q$ is finer than $P$, we have $d_{w,(P,\pi)}(u,v)\leq d_{w,(Q,\pi)}(u,v)$ for $u,v\in V$ which implies that $d_w(C)\leq d_w'(C)$. Then
 $$\lambda_1=\left\lfloor\frac{d_w(C)-m_w}{M_w}\right\rfloor\leq \left\lfloor\frac{d_w'(C)-m_w}{M_w}\right\rfloor=\lambda_2$$
 and hence
 $$\mu_1=\max\limits_{I\in \mathcal {I}(P),|I|=\lambda_1}\sum\limits_{i\in I}k_i=\lambda_1t\leq\lambda_2t=\max\limits_{I\in \mathcal {I}(Q),|I|=\lambda_2}\sum\limits_{i\in I}k_i=\mu_2.$$
 Therefore
 $$K=q^{n-\mu_1}\geq q^{n-\mu_2}\geq K.$$
 This forces that $K=q^{n-\mu_2}$. Therefore $C$ is an MDS $(Q,\pi,w)$-code.
\end{proof}

Let $P=([s],\leq)$ be an anti-chain and $Q=([s],\leq)$ be a poset. Then $Q$ is finer than $P$. When the weight $w$ on $\mathbb{F}_q$ is taken to be the Hamming weight $w_H$, we get the following result that appeared in [\ref{DASS}]

\begin{corollary}
If a code equipped with error-block metric is an MDS code, then it is also an MDS poset block code for every partial order defined on the set $[s]$.
\end{corollary}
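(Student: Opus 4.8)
The plan is to derive this corollary directly from Lemma \ref{finer} by specializing its hypotheses. The key observation is that if $P = ([s], \leq)$ is an anti-chain, then the only comparabilities in $P$ are the trivial ones $i \leq i$; consequently, for \emph{any} poset $Q = ([s], \leq_Q)$ whatsoever, the implication ``$i \leq_P j$ implies $i \leq_Q j$'' holds vacuously, since reflexivity gives $i \leq_Q i$ in every poset. Thus every poset $Q$ on $[s]$ is automatically finer than the anti-chain $P$.

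First I would recall that a code equipped with the error-block metric is precisely a $(P,\pi,w_H)$-code where $P$ is the anti-chain on $[s]$ and $w$ is the Hamming weight $w_H$; this identification is recorded in Section 2.2, where the $(P,\pi,w)$-weight with $w = w_H$ and $P$ an anti-chain reduces to the block (error-block) weight. Hence the hypothesis that the code is an MDS error-block code translates into the statement that it is an MDS $(P,\pi,w_H)$-code with $P$ the anti-chain.

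Next I would invoke Lemma \ref{finer} with this anti-chain $P$, the given arbitrary poset $Q$ on $[s]$, and the weight $w = w_H$. The one remaining hypothesis of the lemma is that the labeling be constant, $\pi(1) = \cdots = \pi(s) = t$; I would note that the block structure of an error-block code indeed assigns a common block length to every coordinate position, so this uniformity hypothesis is met. With all hypotheses of Lemma \ref{finer} verified, the lemma yields immediately that the code is an MDS $(Q,\pi,w_H)$-code, i.e. an MDS poset block code for the order $Q$. Since $Q$ was an arbitrary partial order on $[s]$, the conclusion holds for every such order.

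I do not anticipate a genuine obstacle here, as the corollary is essentially a clean specialization of the preceding lemma; the only point requiring care is the verification that an anti-chain is finer-than by every poset (the vacuous-implication argument above), together with the translation between the error-block terminology and the $(P,\pi,w_H)$-code terminology. The constant-label requirement $\pi \equiv t$ is forced by the definition of the error-block setting and should be stated explicitly to keep the application of Lemma \ref{finer} legitimate.
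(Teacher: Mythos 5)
Your proposal is correct and follows essentially the same route as the paper: the paper likewise observes that every poset on $[s]$ is finer than the anti-chain and then applies Lemma \ref{finer} with $w=w_H$. The one quibble is your assertion that the error-block setting forces a constant labeling $\pi(1)=\cdots=\pi(s)=t$ --- error-block codes in general permit blocks of different sizes, so this hypothesis of Lemma \ref{finer} is a genuine restriction that the corollary (in the paper as well as in your argument) silently inherits.
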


When $\pi(i)=1$ for all $i\in[s]$ and the weight on $\mathbb{Z}_m$ is taken to be the Lee weight $w_L$, we can get the following result has been represented in [\ref{IGS3}].

\begin{corollary}
Let $C\subseteq \mathbb{Z}_m^n$ be an $(n,K)$ code. Let $\mathbb{P}=(M,R)$ be a pomset where $M=\left\{\lfloor\frac{m}{2}\rfloor/1,\ldots,\lfloor\frac{m}{2}\rfloor/n\right\}$ and $R$ is a pomset relation on $M$. If a code $C$ is an MDS code with Lee metric, then it is MDS pomset code for any pomset relation defined on $M$.
\end{corollary}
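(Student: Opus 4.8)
The plan is to recognize this corollary as a direct specialization of Lemma \ref{finer} under the dictionary between pomsets and poset block structures established in Section 2.2. First I would set $s=n$ and take $P=([n],\leq_P)$ to be the anti-chain, with the labeling $\pi(i)=1$ for all $i$ and the weight $w=w_L$ the Lee weight on $\mathbb{Z}_m$. By the Proposition relating the $(\mathbb{P},\pi)$-weight and the $(P,\pi,w)$-weight (specialized to $\pi\equiv 1$), together with the observation recorded after it that the pomset metric reduces to the Lee metric precisely when the pomset relation is the anti-chain relation $R=\{(\lfloor m/2\rfloor/a,\lfloor m/2\rfloor/a):1\le a\le n\}$, the $(P,\pi,w_L)$-metric on $\mathbb{Z}_m^n$ is exactly the Lee metric. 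Thus the hypothesis that $C$ is MDS in the Lee metric says precisely that $C$ is an MDS $(P,\pi,w_L)$-code.

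Next I would translate an arbitrary pomset relation into a finer poset. Given any pomset relation $R$ on $M=\{\lfloor m/2\rfloor/1,\ldots,\lfloor m/2\rfloor/n\}$, define $Q=([n],\leq_Q)$ by $a\leq_Q b\iff (r/a,t/b)\in R$, as in the Proposition; this is a genuine partial order because $R$ is reflexive, anti-symmetric and transitive, and under this identification the $(Q,\pi,w_L)$-metric coincides with the pomset metric $d_{(\mathbb{P},\pi)}$. Since $P$ is an anti-chain, its only comparabilities are $i\leq_P i$, so $i\leq_P j$ trivially implies $i\leq_Q j$; that is, $Q$ is finer than $P$ for every choice of $R$.

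Finally I would invoke Lemma \ref{finer} with the constant labeling $t=1$: since $C$ is an MDS $(P,\pi,w_L)$-code and $Q$ is finer than $P$, the lemma yields that $C$ is an MDS $(Q,\pi,w_L)$-code, which unwinds to the assertion that $C$ is an MDS pomset code for the relation $R$; as $R$ was arbitrary, the conclusion holds for every pomset relation on $M$. I do not expect a genuine technical obstacle here, since the whole argument reduces to an application of Lemma \ref{finer}; the only care required is the bookkeeping of the two identifications (Lee metric with the anti-chain pomset, and an arbitrary pomset relation with a poset finer than the anti-chain). The one point worth double-checking is that Lemma \ref{finer} and the underlying Singleton machinery of Theorem \ref{SIN} apply with $w=w_L$ over $\mathbb{Z}_m$ rather than over $\mathbb{F}_q$, but this is exactly what the remark preceding the examples guarantees.
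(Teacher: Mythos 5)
Your proposal is correct and matches the paper's intended derivation: the paper states this corollary as an immediate specialization of Lemma \ref{finer} (with $P$ the anti-chain, $\pi\equiv 1$, $w=w_L$, and $Q$ the poset induced by the pomset relation via the Proposition of Section 2.2), which is precisely the chain of identifications you carry out. Your added care about the constant labeling hypothesis of Lemma \ref{finer} and about the validity of the machinery over $\mathbb{Z}_m$ is exactly the bookkeeping the paper leaves implicit.
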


\begin{Definition}
Let $w$ be a weight on $\mathbb{F}_q$. For $u\in \mathbb{F}_q^n$, the \emph{$(P,\pi,w)$-ball} with center $u$ and radius $r$ is the set
$$B_{w,(P,\pi)}(u,r)=\{v\in\mathbb{F}_q^n: d_{w,(P,\pi)}(u,v)\leq r\}.$$
When the wight $w$ over $\mathbb{F}_q$ is considered to be Hamming weight, we denote by $B_{(P,\pi)}(u,r)$ the $(P,\pi,w)$-ball with center $u$ and radius $r$.
\end{Definition}

\begin{Definition}
A code $C$ is said to be an \emph{$r$-perfect $(P,\pi,w)$-code} if the $(P,\pi,w)$-balls of radius $r$ centered at the codewords of $C$ are pairwise disjoint and their union is $V$.
\end{Definition}

\subsection{Weighted poset block metric with chain poset}
\quad\;In what follows, we always assume that $P$ is a chain defined by $1<2<\cdots<s$. Recall the notations given in (\ref{sum}) and (\ref{decom}), we have the following.

\begin{theorem}\label{perfect}
Let $C\subseteq V$ be a $(P,\pi,w)$-code and $r=t\cdot M_w$. Then $C$ is $r$-perfect if and only if there is a function
$$f: \bigoplus\limits_{j=t+1}^{s}V_j\rightarrow\bigoplus\limits_{i=1}^{t}V_i$$
such that
$$C=\left\{(f(v),v):v\in\bigoplus\limits_{j=t+1}^{s}V_j\right\}.$$
\end{theorem}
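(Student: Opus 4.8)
The plan is to reduce the $r$-perfect condition to a coset statement by first proving that every ball of radius $r=t\cdot M_w$ is exactly a coset of the subspace $W:=\bigoplus_{i=1}^{t}V_i$ (which, in the paper's notation, is $V_{[t]}$ for $[t]=\{1,\dots,t\}$). Once the balls are identified with cosets of a fixed subspace of codimension $\sum_{j>t}k_j$, the whole equivalence reduces to linear bookkeeping about cosets and their representatives.

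First I would specialize the $(P,\pi,w)$-weight to the chain $1<2<\cdots<s$. For a nonzero $x\in V$, set $j=\max supp_{\pi}(x)$. In a chain the ideal generated by $supp_{\pi}(x)$ is $\{1,2,\dots,j\}$, whose unique maximal element is $j$, so $M_x^P=\{j\}$ and $I_x^P\setminus M_x^P=\{1,\dots,j-1\}$. Hence the weight collapses to
$$\overline{\omega}_{w,(P,\pi)}(x)=W_j(x)+(j-1)M_w.$$
Next I would invoke the elementary bounds $m_w\le W_j(x)\le M_w$, which hold because $x_j\neq\mathbf{0}$ forces some coordinate of weight at least $m_w$ while every coordinate has weight at most $M_w$. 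If $j\le t$ then $\overline{\omega}_{w,(P,\pi)}(x)\le M_w+(t-1)M_w=tM_w$, whereas if $j\ge t+1$ then $\overline{\omega}_{w,(P,\pi)}(x)\ge m_w+tM_w>tM_w$. Together with $\overline{\omega}_{w,(P,\pi)}(\mathbf{0})=0$ this yields $\overline{\omega}_{w,(P,\pi)}(x)\le tM_w\Longleftrightarrow x\in W$, and therefore $B_{w,(P,\pi)}(u,tM_w)=\{v:u-v\in W\}=u+W$: each ball is precisely the coset of $W$ through its center.

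With the ball computation in hand the rest is formal. Cosets of $W$ are pairwise equal or disjoint and cover $V$, and the quotient $V/W$ is canonically identified with $\bigoplus_{j=t+1}^{s}V_j$ via projection $\rho$ onto the last $s-t$ blocks. Then $C$ is $r$-perfect iff the cosets $\{c+W:c\in C\}$ are pairwise disjoint and cover $V$; pairwise disjointness forces distinct codewords into distinct cosets (so $\rho|_C$ is injective) and covering forces every coset to be met (so $\rho|_C$ is surjective), i.e. $C$ is $r$-perfect iff $\rho|_C$ is a bijection onto $\bigoplus_{j=t+1}^{s}V_j$. Given such a bijection, for each $v\in\bigoplus_{j=t+1}^{s}V_j$ there is a unique codeword whose last $s-t$ blocks equal $v$; recording its first $t$ blocks as $f(v)\in\bigoplus_{i=1}^{t}V_i$ defines the required map and gives $C=\{(f(v),v):v\in\bigoplus_{j=t+1}^{s}V_j\}$. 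Conversely, for any such graph the map $\rho$ sends $(f(v),v)\mapsto v$, which is a bijection, so $C$ is $r$-perfect.

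The only genuinely substantive step is the ball identification, and there the two things to get right are that the chain hypothesis makes $M_x^P$ the singleton $\{j\}$ (so the non-maximal term contributes exactly $(j-1)M_w$) and that the strict positivity $m_w>0$ is what excludes any block beyond position $t$, pinning the ball down to exactly $W$ rather than a larger set. Everything after that is coset counting and is routine.
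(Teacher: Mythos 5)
Your proof is correct and follows essentially the same route as the paper's: both arguments hinge on identifying $B_{w,(P,\pi)}(u,tM_w)$ with the coset $u+\bigoplus_{i=1}^{t}V_i$ and then reading off the $r$-perfect condition as the statement that $C$ is a transversal (equivalently, the graph of a function $f$) for these cosets. The only difference is presentational: you actually justify the ball identity via the chain formula $\overline{\omega}_{w,(P,\pi)}(x)=W_j(x)+(j-1)M_w$ and the bounds $m_w\leq W_j(x)\leq M_w$, whereas the paper simply asserts it, and you phrase the final bookkeeping in quotient/coset language rather than the paper's explicit existence-and-uniqueness argument.
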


\begin{proof}
Assume that $C$ is $r$-perfect. Let $v\in \bigoplus\limits_{j=t+1}^{s}V_j$. As $C$ is a $r$-perfect code, there exists $c\in C$ such that $(\textbf{0},v)\in B_{w,(P,\pi)}(c,r)$ which implies that $c-(\textbf{0},v)\in B_{w,(P,\pi)}(0,r)$. Since $P$ is a chain, we have
$$B_{w,(P,\pi)}(0,r)=\{(x_1,\ldots,x_t,0,\ldots,0):x_i\in V_i\}.$$
Therefore there exists $u\in\bigoplus\limits_{i=1}^{t}V_i$ such that $c-(\textbf{0},v)=(u,\textbf{0})$ and hence $c=(u,v)$. Suppose there exists $c'\neq c\in C$ such that $(\textbf{0},v)\in B_{w,(P,\pi)}(c')$. Then $c'=(u',v)$ and $c-c'=(u-u',\textbf{0})\in B_{w,(P,\pi)}(0,r)$, a contradiction to the hypothesis that $C$ is an $r$-perfect code. Thus we can define a function $f: \bigoplus\limits_{j=t+1}^{s}V_j\rightarrow\bigoplus\limits_{i=1}^{t}V_i$ which sends $v\in \bigoplus\limits_{j=t+1}^{s}V_j$ to the unique $u\in\bigoplus\limits_{i=1}^{t}V_i$ such that $c=(f(v),v)$.

On the contrary, suppose that there exists a function $f$ such that $C=\left\{(f(v),v):v\in\bigoplus\limits_{j=t+1}^{s}V_j\right\}$. Then
$$B_{w,(P,\pi)}((f(v),v),r)=\left\{(u,v): u\in\bigoplus\limits_{i=1}^{t}V_i\right\}$$
and thus $|B_{w,(P,\pi)}(c,r)|=q^{k_1+k_2+\cdots+k_t}$. Furthermore, for any $(u,v)\in V$, one has $C\cap B_{w,(P,\pi)}((u,v),r)=(f(v),v)$. Therefore $\mathop{\bigcup}\limits_{c\in C}B_{w,(P,\pi)}(c,r)$ is a disjoint union and its order is $|C|\cdot q^{k_1+k_2+\cdots+k_t}=q^n$. Thus, $C$ is $r$-perfect.
\end{proof}

\begin{remark}
If $C$ is a linear $(P,\pi,w)$-code, then the function $f$ given in Theorem \ref{perfect} is a linear map.
\end{remark}

\begin{proof}
Let $u,v\in \bigoplus\limits_{j=t+1}^{s}V_j$. Then $(f(u),u),(f(v),v)\in C$. Since $C$ is a linear code, we have that $\alpha(f(u),u)+\beta(f(v),v)=(\alpha f(u)+\beta f(v),\alpha u+\beta v)\in C$ for $\alpha,\beta\in\mathbb{F}_q$ which implies that $f(\alpha u+\beta v)=\alpha f(u)+\beta f(v)$. Therefore $f$ is a linear map.
\end{proof}

\begin{Definition}
The \emph{packing radius} $\rho(C)$ of a code $C$ is the largest radius of spheres centered at codewords so that the spheres are pairwise disjoint. We call a code $C$ is \emph{perfect} if it is $\rho(C)$-perfect.
\end{Definition}

The following proposition is a generalization of [\ref{panek}, Lemma 21, Corollary 22] wherein the metric considered was weighted poset metric. The proof is on similar lines and hence omitted.

\begin{proposition}\label{radius}
Let $r=l+i\cdot M_w$ where $l\in[M_w]$ and $i\geq 0$ is an integer. Let $v\in V$ and $C$ be a $(P,\pi,w)$-code. Then
\begin{enumerate}[(1)]
\item $B_{w,(P,\pi)}(v,r)\subseteq B_{(P,\pi)}(v,i+1)$. Moreover, $B_{w,(P,\pi)}(v,r)= B_{(P,\pi)}(v,i+1)$ if and only if $l=M_w$.
\item $\rho(C)\geq M_w\cdot (d_{H}(C)-1)$. Moreover, $\rho(C)=\left(d_{H}(C)-1\right)M_w$ if and only if $d_{w,(P,\pi)}(C)=m_w+\left(d_H(C)-1\right)M_w$.
\end{enumerate}
\end{proposition}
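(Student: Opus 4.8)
The plan is to use the chain hypothesis to make both balls completely explicit, then deduce part (1) directly and reduce the inequality in part (2) to disjointness of poset-block balls via part (1), isolating the equality characterisation as the delicate step. First I would record the shape of the two relevant weights on a chain. Since $P$ is the chain $1<2<\cdots<s$, every nonempty ideal is an initial segment $\{1,\ldots,M\}$; hence for $x\neq \textbf{0}$ with $M=\max supp_{\pi}(x)$ one has $I_x^P=\{1,\ldots,M\}$ with the single maximal element $M$, so
$$\overline{\omega}_{w,(P,\pi)}(x)=W_M(x)+(M-1)M_w,\qquad \overline{\omega}_{w_H,(P,\pi)}(x)=M.$$
Thus $B_{(P,\pi)}(v,i+1)$ is exactly the set of $u$ with $\max supp_{\pi}(u-v)\le i+1$, while membership in $B_{w,(P,\pi)}(v,r)$ is governed by the single number $W_M(u-v)$ together with $M$.

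For part (1) I would write $x=u-v$ and $M=\max supp_{\pi}(x)$. From $W_M(x)+(M-1)M_w\le l+iM_w$ together with $l\le M_w$ and $W_M(x)>0$ I get $(M-1)M_w<(i+1)M_w$, hence $M\le i+1$ and $u\in B_{(P,\pi)}(v,i+1)$, which gives the inclusion. If $l=M_w$ then for any $u\in B_{(P,\pi)}(v,i+1)$ one has $\overline{\omega}_{w,(P,\pi)}(x)\le M\cdot M_w\le (i+1)M_w=r$, so the reverse inclusion holds and the balls coincide. If $l<M_w$, picking $x$ supported only on block $i+1$ (which exists in the meaningful range $i+1\le s$) with $W_{i+1}(x)=M_w$ gives $\max supp_{\pi}(x)=i+1$ but $\overline{\omega}_{w,(P,\pi)}(x)=(i+1)M_w>r$, exhibiting a point of $B_{(P,\pi)}(v,i+1)\setminus B_{w,(P,\pi)}(v,r)$ and proving the ``only if''.

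For the inequality in part (2), assuming $d_H(C)\ge 2$ (the case $d_H(C)=1$ being trivial), I would set $r=(d_H(C)-1)M_w$, i.e. $l=M_w$ and $i=d_H(C)-2$, so that $i+1=d_H(C)-1$. The equality case of part (1) then gives $B_{w,(P,\pi)}(c,r)=B_{(P,\pi)}(c,d_H(C)-1)$ for every codeword $c$. Since the minimal poset-block distance of $C$ equals $d_H(C)$, any two of these poset-block balls are disjoint: a common point would force $\max supp_{\pi}(c-c')\le d_H(C)-1$ via $supp_{\pi}(c-c')\subseteq supp_{\pi}(u-c)\cup supp_{\pi}(u-c')$, contradicting $\max supp_{\pi}(c-c')\ge d_H(C)$. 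Hence $\rho(C)\ge (d_H(C)-1)M_w$.

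The equality characterisation is where I expect the main difficulty. The natural route is to observe that $\rho(C)=(d_H(C)-1)M_w$ holds precisely when the balls of radius $(d_H(C)-1)M_w+1$ (now $l=1$) fail to be pairwise disjoint, and to analyse this overlap for a minimum-distance pair. For the reverse implication I would use $d_w(C)=m_w+(d_H(C)-1)M_w$ to produce a pair $c,c'$ with $\max supp_{\pi}(c-c')=d_H(C)$ and $W_{d_H(C)}(c-c')=m_w$, and then exhibit an explicit common point of the two radius-$\bigl((d_H(C)-1)M_w+1\bigr)$ balls (for instance $c$ itself, which works cleanly when $m_w=1$); for the forward implication one would extract such a minimal pair from an overlap. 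The delicate point, and the genuine crux, is controlling for the top block the quantity $\min_{z}\max\{w(z),w(z-\alpha)\}$ for an entry $\alpha$ of weight $m_w$ and matching it against $m_w$: this is exactly where the argument must use the internal structure of the weight $w$ rather than the formal bound $0<W_M\le M_w$, and I would treat this step with care before asserting the equivalence.
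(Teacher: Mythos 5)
The paper does not actually prove this proposition: it declares the argument to be ``on similar lines'' as the cited results of Panek--Pinheiro and omits it, so there is no proof in the paper to compare yours against. Judged on its own, your treatment of part (1) and of the inequality $\rho(C)\geq M_w\,(d_H(C)-1)$ in part (2) is correct and complete: on a chain $\overline{\omega}_{w,(P,\pi)}(x)=W_M(x)+(M-1)M_w$ with $M=\max supp_{\pi}(x)$, the strict bound $W_M(x)>0$ gives the inclusion, the witness with $W_{i+1}(x)=M_w$ settles the ``only if'' of (1) (your restriction to $i+1\leq s$ is a necessary caveat the statement glosses over), and reducing disjointness to poset-block balls via the case $l=M_w$ is surely the intended route.

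The step you declined to complete --- the equivalence $\rho(C)=(d_H(C)-1)M_w \Leftrightarrow d_{w,(P,\pi)}(C)=m_w+(d_H(C)-1)M_w$ --- is a genuine gap, and in fact it cannot be closed as stated: the equivalence is false for general weights $w$. For the ``only if'' direction, take $s=1$, $\pi(1)=1$, the Lee weight on $\mathbb{F}_5$ (so $m_w=1$, $M_w=2$) and $C=\{0,2\}$. Then $d_H(C)=1$ and $(d_H(C)-1)M_w=0$, while the balls of radius $1$ about $0$ and $2$ are $\{0,1,4\}$ and $\{1,2,3\}$, which meet at $1$; hence $\rho(C)=0=(d_H(C)-1)M_w$, yet $d_{w,(P,\pi)}(C)=w_L(2)=2\neq m_w+0$. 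This is exactly the phenomenon you anticipated: an intermediate point $z=1$ achieves $\max\{w(z),w(z-2)\}=1$ even though $w(2)=2$, so overlap of the radius-$\left((d_H(C)-1)M_w+1\right)$ balls does not force a codeword pair of weight $m_w+(d_H(C)-1)M_w$. The ``if'' direction does hold when $m_w=1$ (take $u=c$ for a pair realising $d_{w,(P,\pi)}(C)$, so the two balls of radius $(d_H(C)-1)M_w+1$ intersect), which is the clean case you identified; but it fails when $m_w\geq 2$, e.g.\ for $w=2w_H$, where the ball of radius $(d_H(C)-1)M_w+1$ coincides with the ball of radius $(d_H(C)-1)M_w$ and the balls stay disjoint, so $\rho(C)>(d_H(C)-1)M_w$ even though $d_{w,(P,\pi)}(C)=m_w+(d_H(C)-1)M_w$.

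So your instinct to stop short of asserting the equivalence was correct, but the conclusion to draw is stronger than ``this needs care'': the quantity $\min_{z}\max\{W_d(z),W_d(z-x_d)\}$ that you isolated is precisely what determines $\rho(C)$ at the threshold, and it is not controlled by $m_w$ and $M_w$ alone. Completing part (2) requires either additional hypotheses on $w$ (the statement is fine for the Hamming weight, where $m_w=M_w=1$) or a reformulation of the equality criterion; note that the subsequent Theorem on MDS versus perfect codes only uses the ``if'' direction, which survives under the extra assumption $m_w=1$.
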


\begin{theorem}\label{PERFECT}
Let $C$ be a $(P,\pi,w)$-code such that $d_{w,(P,\pi)}(C)=m_w+\left(d_H(C)-1\right)M_w$. Then $C$ is MDS if and only if $C$ is perfect.
\end{theorem}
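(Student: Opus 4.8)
The plan is to reduce both implications to Theorem \ref{perfect}, which characterizes $(t\cdot M_w)$-perfect codes on a chain, after first pinning down the numerical parameters. Set $t=d_H(C)-1$. The hypothesis $d_w(C)=m_w+\left(d_H(C)-1\right)M_w=m_w+tM_w$ gives $\lambda=\left\lfloor\frac{d_w(C)-m_w}{M_w}\right\rfloor=t$, so for the chain $1<2<\cdots<s$ the Singleton bound of Theorem \ref{SIN} reads $K\leq q^{n-(k_1+\cdots+k_t)}$, since the unique ideal of size $t$ is $\{1,\dots,t\}$. At the same time, the hypothesis is exactly the equality condition in Proposition \ref{radius}(2), so $\rho(C)=(d_H(C)-1)M_w=t\cdot M_w$; hence ``$C$ is perfect'' means precisely ``$C$ is $(t\cdot M_w)$-perfect''. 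By Theorem \ref{perfect} this is equivalent to $C$ having the graph form $C=\{(f(v),v):v\in\bigoplus_{j=t+1}^{s}V_j\}$ for some $f:\bigoplus_{j=t+1}^{s}V_j\to\bigoplus_{i=1}^{t}V_i$. Thus the whole statement collapses to: $C$ has this graph form if and only if $K=q^{n-(k_1+\cdots+k_t)}$.

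For the direction ``perfect $\Rightarrow$ MDS'' I would simply apply Theorem \ref{perfect} to write $C$ in graph form and count: the map $v\mapsto(f(v),v)$ is a bijection from $\bigoplus_{j=t+1}^{s}V_j$ onto $C$, so $K=q^{k_{t+1}+\cdots+k_s}=q^{n-(k_1+\cdots+k_t)}$, which attains the Singleton bound; hence $C$ is MDS.

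For the converse ``MDS $\Rightarrow$ perfect'' I would show that the projection $p\colon V\to\bigoplus_{j=t+1}^{s}V_j$ onto the last $s-t$ blocks restricts to a bijection $p|_C$. Injectivity is the crux: if distinct $c,c'\in C$ satisfied $p(c)=p(c')$, then $c-c'$ would be supported inside $\{1,\dots,t\}$, so writing $m=\max supp_{\pi}(c-c')\leq t$ and using that $P$ is a chain gives
$$d_{w,(P,\pi)}(c,c')=\overline{\omega}_{w,(P,\pi)}(c-c')=W_m(c-c')+(m-1)M_w\leq mM_w\leq tM_w<m_w+tM_w=d_w(C),$$
contradicting the definition of the minimum distance. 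Since $C$ is MDS we have $|C|=q^{n-(k_1+\cdots+k_t)}=\left|\bigoplus_{j=t+1}^{s}V_j\right|$, so the injective map $p|_C$ is in fact bijective; defining $f(v)$ to be the first $t$ blocks of the unique preimage $p|_C^{-1}(v)$ exhibits $C$ in the graph form required by Theorem \ref{perfect}, whence $C$ is $(t\cdot M_w)$-perfect, i.e. perfect.

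I expect the injectivity step in the converse to be the only real obstacle, and it is handled by the weight computation above, which crucially uses both that $P$ is a chain (so an ideal supported in $\{1,\dots,t\}$ contributes at most $tM_w$) and that $m_w>0$ (so that the strict inequality $tM_w<m_w+tM_w$ holds); everything else is bookkeeping with cardinalities and the cited results.
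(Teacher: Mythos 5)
Your proof is correct and follows essentially the same route as the paper's: both directions reduce to the graph-form characterization of Theorem \ref{perfect} together with Proposition \ref{radius} and the computation $\lambda=d_H(C)-1$. The only cosmetic differences are that you spell out the injectivity of the projection onto the last $s-t$ blocks (which the paper simply cites from the proof of Theorem \ref{SIN}) and that in the perfect-implies-MDS direction you count $|C|$ via the graph form rather than via the ball size $\left|B_{(P,\pi)}(0,d_H(C)-1)\right|=q^{k_1+\cdots+k_{d_H(C)-1}}$.
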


\begin{proof}
Suppose that $C$ is an MDS code. Then $|C|=q^{n-(k_1+k_2+\cdots+k_{\lambda})}$, where $\lambda=\left\lfloor\frac{d_w(C)-m_w}{M_w}\right\rfloor$. Denote by $d=d_H(C)$. Since $P$ is a chain, we have that
$$d_w(C)=m_w+M_w\cdot (d-1).$$
Therefore
$$\lambda=\left\lfloor\frac{d_w(C)-m_w}{M_w}\right\rfloor=\left\lfloor \frac{m_w+M_w\cdot (d-1)-m_w}{M_w}\right\rfloor=d-1.$$

From the proof of Theorem \ref{SIN}, we have that there exists an injective map
$$g:\mathbb{F}_q^{n-(k_1+k_2+\cdots+k_{\lambda})}\rightarrow \mathbb{F}_q^{k_1+k_2+\cdots+k_{\lambda}}$$
and
$$C=\left\{\left(g(u),u\right): u\in \mathbb{F}_q^{k_{\lambda+1}+\cdots+k_s}\right\}.$$
It follows from Theorem \ref{perfect} and Proposition \ref{radius} (2) that $C$ is $\lambda M_w$-perfect, that is, $C$ is $\rho(C)$-perfect.

Conversely, if $C$ is a perfect code, then
$|C|\cdot |B_{w,(P,\pi)}(0,\rho(C))|=q^n$. It follows from Proposition \ref{radius} (1) that
$$|B_{w,(P,\pi)}(0,\rho(C))|=|B_{w,(P,\pi)}(0,M_w\cdot(d-1))|= |B_{(P,\pi)}(0,d-1)|=q^{k_1+\cdots+k_{d-1}}.$$
Therefore
$$|C|=q^{n-(k_1+\cdots+k_{d-1})}=q^{n-(k_1+\cdots+k_{\lambda})}$$
which implies that $C$ is an MDS $(P,\pi,w)$-code.
\end{proof}

As a direct application, if we choose the weight $w$ on $\mathbb{F}_q$ as the Hamming weight $w_H$, we immediately get the following result that appeared in [\ref{DASS}].

\begin{corollary}
A poset block code $C$ is perfect if and only if $C$ is MDS with poset block metric.
\end{corollary}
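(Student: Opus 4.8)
The plan is to obtain this corollary as an immediate specialization of Theorem \ref{PERFECT} to the Hamming weight, so the first task is to identify what the hypothesis of that theorem becomes in this case. Recall from Section 2.2 that when $w$ is the Hamming weight $w_H$ over $\mathbb{F}_q$, the $(P,\pi,w)$-weight is exactly the poset block weight, so the weighted poset block metric coincides with the poset block metric; under this identification the notions of MDS $(P,\pi,w)$-code and $r$-perfect $(P,\pi,w)$-code become the usual MDS and perfect poset block codes. Throughout we keep the standing assumption of this subsection that $P$ is the chain $1<2<\cdots<s$, which is the setting of Theorem \ref{PERFECT}.

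Next I would compute the two constants $m_w$ and $M_w$ for the Hamming weight. Since every nonzero element of $\mathbb{F}_q$ has Hamming weight equal to $1$ and the zero element has weight $0$, the definitions in (\ref{max}) and (\ref{min}) give
$$m_{w_H}=\min\{w_H(\alpha):0\neq\alpha\in\mathbb{F}_q\}=1=\max\{w_H(\alpha):\alpha\in\mathbb{F}_q\}=M_{w_H}.$$
The key observation is then that the gap condition required by Theorem \ref{PERFECT}, namely $d_{w,(P,\pi)}(C)=m_w+(d_H(C)-1)M_w$, becomes vacuous: substituting $m_{w_H}=M_{w_H}=1$ and using that $d_{w_H,(P,\pi)}(C)=d_H(C)$ by the very definition of $d_H(C)$, the right-hand side collapses to $1+(d_H(C)-1)\cdot 1=d_H(C)=d_{w_H,(P,\pi)}(C)$. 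Thus the hypothesis holds for \emph{every} poset block code $C$ with no additional restriction.

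With this in hand, the conclusion is immediate: applying Theorem \ref{PERFECT} with $w=w_H$, we obtain that $C$ is MDS if and only if $C$ is perfect, which is precisely the statement of the corollary. There is no genuine obstacle here; the only point that warrants a line of verification is the reduction of the gap condition, and the reason it simplifies is structural rather than computational, namely that the Hamming weight takes only the two values $0$ and $1$, forcing $m_{w_H}=M_{w_H}$. I would therefore present the proof as a short paragraph recording the values of $m_{w_H}$ and $M_{w_H}$, noting that the hypothesis of Theorem \ref{PERFECT} is automatically satisfied, and invoking that theorem directly.
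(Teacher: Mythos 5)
Your proposal is correct and follows exactly the route the paper intends: the paper presents this corollary as an immediate specialization of Theorem \ref{PERFECT} to $w=w_H$, offering no further argument. Your only addition is to make explicit why the hypothesis $d_{w,(P,\pi)}(C)=m_w+(d_H(C)-1)M_w$ is automatic when $m_{w_H}=M_{w_H}=1$ and $d_{w_H,(P,\pi)}(C)=d_H(C)$ by definition, which is a worthwhile clarification the paper leaves implicit.
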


Let $\pi$ be a labeling satisfing $\pi(i)=1$ for every $i\in[s]$. Applying Theorem \ref{PERFECT}, we can get the following result  which has represented in [\ref{ANTI}] and [\ref{panek}].

\begin{corollary}
Let $C$ be a code with weighted poset metric. Then $C$ is MDS if and only if $C$ is perfect.
\end{corollary}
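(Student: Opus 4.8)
The plan is to obtain this corollary as the $\pi\equiv 1$ specialization of Theorem~\ref{PERFECT}. Setting $\pi(i)=1$ for every $i\in[s]$ collapses each block $V_i$ to a single copy of $\mathbb{F}_q$, so that $\overline{\omega}_{w,(P,\pi)}$ becomes exactly the weighted poset weight whose metric is denoted $d_{(P,w)}$ (as recorded in Example~4.1). Since the standing assumption of this subsection is that $P$ is the chain $1<2<\cdots<s$, the only gap between the corollary and a direct citation of Theorem~\ref{PERFECT} is the verification of its hypothesis, namely that $d_{w,(P,\pi)}(C)=m_w+\bigl(d_H(C)-1\bigr)M_w$ for the code $C$ in question.

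First I would pin down the shape of the weight on a chain. For a nonzero $x\in V$ write $t=\max\, supp_{\pi}(x)$; because $P$ is a chain one has $I_x^P=\{1,\dots,t\}$ and $M_x^P=\{t\}$, whence
$$\overline{\omega}_{w,(P,\pi)}(x)=W_t(x)+M_w\cdot(t-1).$$
Using $m_w\le W_t(x)\le M_w$, together with the fact that in the Hamming specialization the poset weight of $x$ equals $|\langle supp_{\pi}(x)\rangle|=t$, I would read off that $d_H(C)=\min_x \max\, supp_{\pi}(x)$ over nonzero codeword differences, and that every such $x$ therefore satisfies $\overline{\omega}_{w,(P,\pi)}(x)\ge m_w+M_w\cdot\bigl(d_H(C)-1\bigr)$. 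This furnishes the inequality $d_{w,(P,\pi)}(C)\ge m_w+\bigl(d_H(C)-1\bigr)M_w$ for free.

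The reverse inequality is the one substantive point, and it is where linearity enters. I would choose a codeword difference $x_0$ attaining $\max\, supp_{\pi}(x_0)=d_H(C)=:d$, so that its top coordinate $(x_0)_d$ is nonzero, and then rescale: picking $\alpha\in\mathbb{F}_q^{*}$ with $w\bigl(\alpha (x_0)_d\bigr)=m_w$ (possible because the nonzero values of $w$ do attain the minimum $m_w$ by its definition), the vector $\alpha x_0$ is again a codeword difference with the same top block $d$ and weight exactly $m_w+M_w\cdot(d-1)$. This yields $d_{w,(P,\pi)}(C)=m_w+\bigl(d_H(C)-1\bigr)M_w$, and Theorem~\ref{PERFECT} then applies verbatim to give ``$C$ is MDS $\Longleftrightarrow$ $C$ is perfect''.

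I expect the rescaling step to be the only real obstacle, and it is worth flagging that it genuinely uses that the set of codeword differences is closed under scalar multiplication: for a general non-linear code the extremal difference $x_0$ need not admit a scalar multiple whose top coordinate has weight $m_w$, so the clean normalization $d_{w,(P,\pi)}(C)=m_w+\bigl(d_H(C)-1\bigr)M_w$ may fail. Accordingly the corollary should be understood in the linear-code setting of the cited works [\ref{ANTI},\ref{panek}], where this condition holds automatically; once it is in hand, no computation beyond invoking Theorem~\ref{PERFECT} is required.
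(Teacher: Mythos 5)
Your route is the same as the paper's: specialize Theorem~\ref{PERFECT} to the labeling $\pi\equiv 1$ under the standing assumption of this subsection that $P$ is the chain $1<2<\cdots<s$. The paper, however, merely cites Theorem~\ref{PERFECT} and never checks its hypothesis $d_{w,(P,\pi)}(C)=m_w+\bigl(d_H(C)-1\bigr)M_w$; you supply exactly that verification, and it is correct: on a chain $\overline{\omega}_{w,(P,\pi)}(x)=w(x_t)+M_w\cdot(t-1)$ with $t=\max supp_{\pi}(x)$ gives the lower bound $d_{w,(P,\pi)}(C)\geq m_w+\bigl(d_H(C)-1\bigr)M_w$, and rescaling an extremal codeword $x_0$ by $\alpha=\beta\,(x_0)_d^{-1}$ with $w(\beta)=m_w$ gives the matching upper bound. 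Your caveat is also well taken: the rescaling genuinely uses that the set of codeword differences is closed under scalar multiplication, so the corollary as literally stated (with no hypothesis on $d_w(C)$ and no linearity assumption) should be read as a statement about linear codes, consistent with the settings of [\ref{ANTI}] and [\ref{panek}] --- a point the paper glosses over.
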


\begin{remark}
Note that Theorem \ref{PERFECT} is only applicable for the case when $P$ is considered to be a chain. It is not applicable for Hamming metric and Lee metric.
\end{remark}

\section{Conclusion}

\quad\; In this paper, we study weighted poset block metric over $\mathbb{F}_q^n$ which is a generalization to metrics such as Hamming metric, Lee metric, poset metric, pomset metric, poset block metric and error-block metric and so on. We give a complete description of the groups of linear isometries of weighted poset block space in terms of a semi-direct product of its two subgroups. Note that our conclusion remains valid if we replace $V=\bigoplus\limits_{i=1}^{s}\mathbb{F}_q^{k_i}$ with $V =\bigoplus\limits_{i=1}^{s}R^{k_i}$ where $R$ is an associative ring with identity and there exists a multiplicative invertible element $\alpha\in R$ such that $w(\alpha)=m_w$. Moreover, basic parameters such as packing radius and bounds for minimum distance of weighted poset block codes are established and the relationship between MDS codes and perfect codes when the poset is considered to be a chain is investigated immediately.

\end{document}